\DeclarePairedDelimiter\ceil{\lceil}{\rceil}
\newtheorem{theorem}{Theorem}
\newtheorem{cl}{Claim}
\newtheorem{lemma}{Lemma}
  \def\rem#1{{\marginpar{\raggedright\scriptsize #1}}}
\def\DEBUG{true}
\newcommand{\lw}[1]{\rem{\textcolor{blue}{$\bullet$ #1}}}
\newcommand{\el}[1]{\rem{\textcolor{gray}{$\bullet$ #1}}}
\newcommand{\vn}[1]{\rem{\textcolor{red}{$\bullet$ #1}}}
 \newcommand{\lw}[1]{}
 \newcommand{\el}[1]{}
 \newcommand{\vn}[1]{}
\title{On Some Variants of Euclidean K-Supplier}
\author{ Euiwoong Lee\thanks{Computer Science and  Engineering, University of Michigan.} \and Viswanath Nagarajan\thanks{Industrial and Operations Engineering, University of Michigan. Research supported in part by NSF grants CMMI-1940766 and CCF-2006778.} \and Lily Wang$^\dagger$ }
\newcommand{\N}{\mathbb{N}}
\newcommand{\classP}{\mathbf{P}}
\newcommand{\classNP}{\mathbf{NP}}
\newcommand{\R}{\mathbb{R}}
\newcommand{\eps}{\epsilon}
\begin{document}
\maketitle

\begin{abstract}
The    $k$-Supplier problem is an important location problem that has been actively studied in both general and Euclidean metrics. Many of its variants have also been studied,  primarily  on general metrics. 
    We study two variants of $k$-Supplier, namely Priority $k$-Supplier and $k$-Supplier with Outliers, in Euclidean metrics. We obtain    $(1+\sqrt{3})$-approximation algorithms for both variants, which are the first improvements over the previously-known factor-$3$ approximation  (that is known to be best-possible for general metrics). We also study the Matroid Supplier problem on Euclidean metrics, and  show that it cannot be approximated to a factor better than $3$ (assuming $P\ne NP$); so the Euclidean metric offers no improvement in this case.

\end{abstract}

\section{Introduction}
In the $k$-Supplier problem, the input consists of a set of suppliers $I$ and a set of clients $J$ contained in some metric space $(I\cup J, d)$, and $k \in \N$. The goal is to choose a subset $C \subseteq I$ of $k$ suppliers to minimize $\max_{v \in J} d(v, C)$ where $d(v, C) := \min_{u \in C} d(v, u)$. A basic problem in the large and well-studied class of location problems, $k$-Supplier has various applications in operations research including choosing sites for opening plants, placing servers in a network, and clustering data. 
An important special case of $k$-Supplier is $k$-Center where the set of clients $J$ is equal to the set of suppliers  $I$. 

The approximability of $k$-Supplier and $k$-Center on general metric spaces is well understood. A $2$-approximation for $k$-Center and $3$-approximation for $k$-Supplier follow from the work of Gonzalez~\cite{gonzalez1985clustering} and Hochbaum and Shmoys~\cite{hochbaum1985best, hochbaum1986unified}. Simple reductions from Vertex Cover show that these approximation ratios are tight assuming $\classP \neq \classNP$. 

However, the approximability of $k$-Supplier and $k$-Center on  {\em  Euclidean metrics} (which is a practically important special case)  is still open.  Feder and Greene~\cite{feder1988optimal} showed that it is NP-hard to approximate $k$-Supplier and $k$-Center better than $\sqrt{7} \approx 2.65$ and $\sqrt{3} \approx 1.73$ respectively. While it is still open whether one can obtain a $(2 - \eps)$-approximation for $k$-Center for some constant $\eps > 0$, Nagarajan et al.~\cite{nagarajan2013euclidean} obtained  a $(1 + \sqrt{3}) \approx 2.73$ approximation algorithm for Euclidean $k$-Supplier.

Motivated by various practical needs, many variants of $k$-Supplier and $k$-Center also have been proposed and studied in the literature. In the Priority $k$-Supplier problem, the clients are additionally weighted with a priority function $p : V \to \R_+$. Given a set of chosen suppliers $C \subseteq I$, the objective function is now  $\max_{v \in J} p(v) d(v, C)$. This problem naturally models the scenario where each client has a different ``speed''. Plesnik \cite{plesnik1987heuristic} gave a $3$-approximation algorithm for Prioirty $k$-Supplier,  
matching the approximability of the basic version. 

Another variant is $k$-Supplier with Outliers where the input additionally contains a bound $\ell \in \N$ and the goal is to choose $k$ suppliers $C \subseteq I$ and $\ell$ {\em outliers} $O \subseteq J$ to minimize $\max_{v \in J \setminus O} d(v, C)$. This problem was  introduced by Charikar et al.~\cite{charikar2001algorithms}. 
Recently, Chakrabarty et al.~\cite{chakrabarty2020non} obtained a $3$-approximation algorithm for this problem, again 
matching the approximability of the basic $k$-Supplier problem. 
Yet another variant is the Matroid Supplier problem: instead of a cardinality bound on the chosen suppliers, the set $C$ of chosen suppliers is required to be independent in some matroid. Chen et al.~\cite{chen2013matroid} obtained a $3$-approximation algorithm for this problem as well.

\paragraph{Results and Techniques.} 
To the best of our knowledge, the study of the above  $k$-Supplier variants  has been limited to general metrics. 
In this paper, we study these problems in Euclidean metrics.
Our first result is the following: 
\begin{theorem}
There is an $(1+\sqrt{3})\approx 2.73$-approximation algorithm for Euclidean Priority $k$-Supplier.
\label{thm:priority}
\end{theorem}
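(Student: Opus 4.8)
The plan is to build on the $(1+\sqrt3)$-approximation for ordinary Euclidean $k$-Supplier of Nagarajan et al.~\cite{nagarajan2013euclidean}, extending its ideas to the non-uniform radii created by priorities. First reduce to a decision version: the optimal value $\mathrm{OPT}=\max_{w\in J}p(w)\,d(w,C^*)$ is one of the $O(|I|\cdot|J|)$ numbers $p(u)\,d(u,w)$ with $u\in I,\ w\in J$, so after a binary search it suffices, given $r\ge\mathrm{OPT}$, to compute in polynomial time a set $C\subseteq I$ with $|C|\le k$ and $p(w)\,d(w,C)\le(1+\sqrt3)\,r$ for all $w\in J$. Assign to each client $w$ its \emph{target radius} $r_w:=r/p(w)$; feasibility of $r$ means every $w$ has a supplier within $r_w$, and in particular $C^*$ contains a supplier $c^*_w$ with $d(w,c^*_w)\le r_w$. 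The greedy of Plesnik~\cite{plesnik1987heuristic} already gives factor $3$ in this formulation, so the task is to use the Euclidean structure to push it down to $1+\sqrt3$.

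The algorithm has a clustering phase and a selection phase. For the clustering phase, process clients in nondecreasing order of $r_w$ (nonincreasing priority) and greedily maintain a set $P$ of \emph{pivots}: when a client $u$ is processed, if some $v\in P$ has $d(u,v)\le r_u+r_v$ then discard $u$ and mark it assigned to $v$, otherwise add $u$ to $P$. Since two clients served by a common supplier lie within the sum of their target radii, the processing order forces any two pivots to have distinct optimal suppliers, so $|P|\le k$; also each assigned $u$ obeys $d(u,v)\le r_u+r_v\le 2r_u$ for its pivot $v$, with $r_v\le r_u$. In the selection phase, pick one supplier $\sigma(v)$ per pivot, aiming for $C=\{\sigma(v):v\in P\}$ to cover every client within $(1+\sqrt3)$ times its target radius. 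This needs two ingredients. (i) A routing argument: $c^*_w$ lies within the target radius of some pivot, and one argues every client can be charged to a pivot to which its own optimal supplier is ``attached'', so that when $|P|=k$ the clients charged to $v$ all lie within their target radii of the single supplier $c^*_v$. (ii) A geometric lemma generalizing that of~\cite{nagarajan2013euclidean}: for a pivot $v$ (with $c^*_v$ unknown but in $B(v,r_v)\cap I$) and the clients charged to it, which lie within their own target radii of $c^*_v$ and have target radius $\ge r_v$, there is a supplier $\sigma(v)\in B(v,r_v)\cap I$ — found by moving as far as possible into the intersection of the balls $B(u,r_u)$ of the charged clients — with $d(u,\sigma(v))\le(1+\sqrt3)\,r_u$ for every charged $u$. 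If $|P|<k$, the spare budget is used to cover directly the few clients whose charge fails, i.e.\ those whose pivot neighborhood already shows $\mathrm{OPT}$ spends more than one supplier near them.

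The main obstacle is this geometric lemma with non-uniform radii. In the uniform case it is the technical heart of~\cite{nagarajan2013euclidean}: naive choices such as ``the supplier closest to $v$'' can fail when the candidate suppliers near $v$ are spread out, so $\sigma(v)$ must be chosen carefully against the balls of the charged clients, and the constant $1+\sqrt3$ falls out of a planar worst-case configuration of $v$, $c^*_v$, $\sigma(v)$ and an extremal charged client. With priorities the charged clients carry a range of target radii and the relevant balls have different sizes, so the delicate step is to show the extremal configuration is still governed by the single inequality $r_v\le r_u$ and still yields exactly $1+\sqrt3$ rather than something larger; choosing the clustering threshold to be $r_u+r_v$ (not $2r_u$) is what keeps the routing step and this lemma consistent, and re-verifying the routing (Hall-type) condition and the slack argument under non-uniform radii is the remaining bookkeeping.
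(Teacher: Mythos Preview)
Your proposal diverges from the paper's proof and contains a genuine gap at exactly the point you flag as ``the main obstacle.''

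The paper does \emph{not} pick one supplier per pivot. Instead it clusters with threshold $\sqrt{3}$ in priority-distance (equivalently $d(v,\bar v)\le\sqrt{3}\,r_v$ for the discarded client $v$), so that any two representatives $v_i,v_j\in S$ with $r_{v_i}\le r_{v_j}$ satisfy $d(v_i,v_j)>\sqrt{3}\,r_{v_j}$. The Euclidean lemma that is actually proved (and that is the lemma of \cite{nagarajan2013euclidean}) is: \emph{no supplier can lie within target radius of three representatives}. This follows from the cosine law, since among three points some pair subtends angle at most $2\pi/3$ at the supplier, giving $d(v_i,v_j)^2\le r_{v_i}^2+r_{v_j}^2+r_{v_i}r_{v_j}\le 3r_{v_j}^2$, a contradiction. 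Thus the representatives $S$ (which may number up to $2k$) form the vertex set of a graph in which each supplier is an edge or self-loop; the optimal suppliers induce an edge cover of size $k$, so a minimum edge cover has size $\le k$, and any client is within $\sqrt{3}\,r_v + r_{\bar v}\le(1+\sqrt{3})r_v$ of a chosen supplier.

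Your scheme uses the looser threshold $r_u+r_v$, which only guarantees $|P|\le k$, and then needs a per-pivot selection lemma of the form ``there exists $\sigma(v)\in B(v,r_v)\cap I$ with $d(u,\sigma(v))\le(1+\sqrt{3})r_u$ for every client charged to $v$.'' There is no such lemma in \cite{nagarajan2013euclidean}; the paper you cite uses the edge-cover route above, not a per-pivot geometric selection. More importantly, your lemma as stated is not algorithmic: the set of clients ``charged'' to $v$ is defined through the unknown optimal supplier $c^*_v$, so you cannot ``move into the intersection of the balls $B(u,r_u)$ of the charged clients'' without knowing $C^*$. If instead you try to serve \emph{all} clients $u$ with $d(u,v)\le r_u+r_v$ from a single supplier in $B(v,r_v)$, that is simply false in general. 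Without the ``at most two representatives per supplier'' lemma and the edge-cover step, the argument collapses back to the $r_u+2r_v\le 3r_u$ bound of Plesnik.
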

This is based on a relation to the minimum edge-cover problem, as in \cite{nagarajan2013euclidean}. However, the graph for the edge-cover instance is constructed differently:  we need to select ``representative'' clients (that correspond to nodes in the graph) in decreasing order of their priorities. 

Our second and main technical result is the following: 

\begin{theorem}
There is an $(1+\sqrt{3})\approx 2.73$-approximation algorithm for Euclidean $k$-Supplier with Outliers.
\label{thm:outliers}
\end{theorem}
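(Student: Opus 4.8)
The plan is to combine the edge-cover technique of Nagarajan et al.~\cite{nagarajan2013euclidean} with a threshold/guessing step to handle the outlier constraint. First I would guess the optimal radius $r^\ast$: since the optimum value is one of the $O(|I|\cdot|J|)$ pairwise distances $d(u,v)$, we can enumerate all candidates and run the algorithm for each, keeping the smallest feasible $r$. Fix such a guess $r$ with $r \ge r^\ast$. Following the known paradigm, I would build a ``coverage'' graph on a carefully chosen set of representative clients: greedily pick a maximal set $S \subseteq J$ of clients that are pairwise more than $2r$ apart (so any supplier can be within $r$ of at most one client in $S$), processing clients in a suitable order. Every client in $J\setminus S$ is within $2r$ of some representative. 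The geometric heart of the argument, exactly as in~\cite{nagarajan2013euclidean}, is that if two representatives $s_1,s_2\in S$ must both be served (each has a supplier within $r^\ast$ in the optimal solution) and $d(s_1,s_2)$ is ``small'' (at most $2\sqrt{3}\,r^\ast$), then there is a \emph{single} supplier within $(1+\sqrt 3)r^\ast$ of both — this is the planar packing fact that yields the $1+\sqrt3$ factor. Representatives at distance more than $2\sqrt3\, r$ cannot be jointly covered cheaply and must each ``pay'' for their own supplier.

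The new ingredient is accounting for outliers. In the basic problem every representative must be covered; here the optimal solution may declare up to $\ell$ clients as outliers, which in turn lets it leave some representatives ``uncovered'' and redistribute the freed-up suppliers. I would encode this as a minimum-cost problem on the representative graph $G=(S,E)$, where $E$ joins pairs of representatives at distance at most $2\sqrt3\,r$. A feasible optimal solution, restricted to $S$, corresponds to choosing a set $S'\subseteq S$ of representatives to cover and then covering them either by a private supplier (cost $1$ per representative) or by a shared supplier along an edge (cost $1$ per pair); the uncovered representatives $S\setminus S'$, together with the clients near them, must be absorbable within the outlier budget $\ell$. Concretely, for each representative $s\in S$ let $n_s$ be the number of clients (including $s$) within distance $2r$ of $s$ that are not within $2r$ of any earlier-processed representative, so $\sum_{s} n_s \le |J|$; if $s$ is left uncovered we must spend $n_s$ outliers on its ``cluster.'' So the subproblem is: choose a subgraph structure (a collection of vertices and edges of $G$) covering a vertex subset $S'$ with $\sum_{s\in S\setminus S'} n_s \le \ell$, minimizing the number of suppliers used, and check it is $\le k$. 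This is essentially a \emph{budgeted edge-cover / matching-type} problem on $G$.

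The main obstacle — and where I expect the real work to lie — is solving this budgeted covering subproblem optimally (or near-optimally in a way that does not lose anything in the radius) in polynomial time. Plain minimum edge cover is polynomial via matching, but the outlier budget adds a knapsack-like side constraint, which generically makes such problems NP-hard; the hope is that the structure here (each ``cost'' is $1$, and we are on a graph where we want a min-size edge cover of a chosen vertex set) keeps it tractable. I would try to reduce it to a min-cost flow / $b$-matching formulation: think of each edge as optionally saving one supplier over the all-private solution, so we want to pick a maximum-weight set of ``savings'' (a matching among covered vertices) subject to the covered set $S'$ leaving $\le\ell$ outliers and using $\le k$ suppliers — parametrizing over the number $t$ of covered representatives and looking for a maximum matching in $G[S']$ for the best choice of $S'$ of size $t$ with smallest total $n_s$-cost outside it. This last ``pick the cheapest $S\setminus S'$ to drop'' is itself a matroid-intersection / flow computation once we fix how the matching interacts with it. Finally I would verify the approximation guarantee: the representatives are covered within $(1+\sqrt3)r$ (shared edges by the planar packing lemma, private suppliers trivially within $r$), every non-outlier non-representative client is within $2r\le(1+\sqrt3)r$ of its covered representative hence within $(1+\sqrt3)r + (1+\sqrt3)r$... — here I would instead route it to the supplier serving its representative and re-examine the triangle-inequality bookkeeping so the final bound comes out to $(1+\sqrt3)r^\ast$, mirroring the tight analysis in~\cite{nagarajan2013euclidean}.
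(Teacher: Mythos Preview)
Your proposal has two genuine gaps, one geometric and one algorithmic.

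\textbf{The geometric setup is inconsistent.} You separate representatives by more than $2r$, which (as you note) forces each supplier to be within $r$ of at most \emph{one} representative --- but then the edge-cover structure collapses, since there are no ``shared'' suppliers and you are back to the $3$-approximation regime. The actual lemma from~\cite{nagarajan2013euclidean} goes the other way: representatives are separated by more than $\sqrt{3}\,r$, and the Euclidean fact is that any supplier within $r$ of some representative is within $r$ of \emph{at most two} of them (an angle argument). Edges are then induced by actual suppliers that are simultaneously within $r$ of two representatives, not by the inter-representative distance being below $2\sqrt{3}\,r$. Your stated ``geometric heart'' --- that two representatives at distance $\le 2\sqrt{3}\,r^\ast$, each with its own optimal supplier, admit a single supplier within $(1+\sqrt{3})r^\ast$ of both --- is not the right statement and does not follow from anything in $I$. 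With the correct $\sqrt{3}$-separation, a non-outlier client is within $\sqrt{3}\,r$ of its representative and hence within $(1+\sqrt{3})r$ of the supplier labeling the covering edge; your triangle-inequality bookkeeping at the end does not close because the separation parameter is wrong.

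\textbf{The budgeted edge-cover subproblem is the crux, and your sketch does not solve it.} Even after fixing the geometry, you correctly arrive at: choose a subset $S'\subseteq S$ to cover with at most $k$ suppliers (via an edge cover of $G[S']$) while the dropped clusters cost at most $\ell$ outliers. Your idea of parametrizing over $|S'|=t$, picking the cheapest $|S|-t$ clusters to drop, and then maximizing a matching on the rest does not work: the set $S'$ that minimizes outlier cost need not be the one with small minimum edge cover, and these objectives interact nontrivially. The paper does \emph{not} solve this purely combinatorially. Instead it writes an LP with variables $y_i$ (open supplier $i$) and $z_j$ (make $j$ an outlier), strengthens it with constraints $z(S)+y(f(S))\ge\lceil|S|/2\rceil$ for all well-separated $S$, and runs a round-or-cut loop via the ellipsoid method: given a fractional $(y,z)$, it builds the representative set $A$ in \emph{increasing $z$-value order} (this is the ``suitable order'' you left unspecified, and it is essential for the outlier accounting), checks whether $(y,z)$ lies in the edge-cover polytope of the resulting graph, and if so rounds. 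The rounding step itself requires a new structural result (the paper's Theorem~\ref{thm:EC-card}): the edge-cover LP with a cardinality bound on the $E$-edges only (the outlier self-loops $L$ are exempt) is integral, proved by an exchange argument on perfect-matching decompositions. Without the LP, the ordering by $z$-value, and this integrality theorem, there is no polynomial-time handle on the subproblem you identified.
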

This requires a linear-program (LP) in conjunction with the relation to edge-cover. Moreover, we do not know how to solve the resulting LP in polynomial time. Instead, we use a ``round or cut'' approach that is built atop the ellipsoid algorithm, and in each step it either finds an approximate solution or a violated LP constraint. 
We note that  round-or-cut  has been used recently to address some other $k$-Supplier problems~\cite{chakrabarty2019generalized}, but the focus there was on general metrics and dealing with   complex constraints on the suppliers. In contrast, our goal is to exploit the Euclidean metric  to  improve the approximation ratio (beyond $3$). Another important step in proving Theorem~\ref{thm:outliers} is an integrality property for the edge-cover polytope with a special type of cardinality constraint; this result might also be of some independent interest.

Finally, we show that not all natural variants of $k$-Supplier  are strictly easier in Euclidean metrics. In particular, we consider the Matroid Supplier problem where there is a matroid constraint on $I$ and the goal is to find an independent set $C$ that minimizes $\max_{v \in J} d(v, C)$. While this problem admits a $3$-approximation algorithm in general metrics~\cite{chakrabarty2019generalized}, we prove the following theorem that Euclidean spaces do not strictly improve the approximation ratio.  

\begin{theorem}
For any constant $\eps > 0$, it is NP-hard to approximate Euclidean Matroid Supplier within a factor of $(3 - \eps)$. 
\label{thm:hardness}
\end{theorem}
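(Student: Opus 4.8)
The plan is to give a gap-preserving reduction from $3$-SAT whose output instances are \emph{one-dimensional}: all suppliers and clients lie on a line in $\R$. A line metric is trivially Euclidean, so all of the difficulty is carried by the matroid, which can be made rich enough to recover the factor-$3$ hardness that holds in general metrics (and which, by the known $3$-approximation, is best possible even there). From a $3$-CNF formula $\phi$ we build an instance with a scale parameter $r>0$ such that $\phi$ satisfiable $\Rightarrow$ the optimum is exactly $r$, and $\phi$ unsatisfiable $\Rightarrow$ the optimum is at least $3r$. Given this gap, any $(3-\eps)$-approximation algorithm run on the instance returns a solution of value $\le (3-\eps)r<3r$ when $\phi$ is satisfiable and of value $\ge 3r$ otherwise, hence decides $3$-SAT; this proves the theorem.

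\emph{The gadgets and the line metric.} Each ``variable slot'' is a length-$4r$ segment of the line carrying three suppliers, at positions $0,2r,4r$, and two clients, at the midpoints $r$ and $3r$; distinct segments are placed pairwise at distance $\gg r$ along the line. The midpoint client at $r$ is within distance $r$ of exactly the two flanking suppliers ($0$ and $2r$), at distance $3r$ from the third supplier of its segment ($4r$), and $\gg r$ from every other supplier; symmetrically for the client at $3r$. Here $4r$ is the largest segment length for which each midpoint client is still within $r$ of both of its flanking suppliers, and this is exactly what pins the ``far'' distance at $3r$. In particular every client--supplier distance lies in $\{r\}\cup[3r,\infty)$, so covering all clients at radius $<3r$ is equivalent to covering them at radius $r$. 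The coverage requirement of a segment forces the selection to contain \emph{either} the middle supplier \emph{or} both endpoints --- a binary choice --- and chaining several segments per variable (glued by consistency gadgets) forces all copies of a variable to agree, so that a feasible selection induces a Boolean assignment.

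\emph{The matroid.} Take $M$ to be the cycle matroid of an auxiliary multigraph $G$ whose edge set is exactly the supplier set, so that an independent set of suppliers is a forest of $G$. We build $G$ so that, for each clause $C_\ell$, the ``all three literals falsified'' selection pattern for $C_\ell$ --- the middle supplier of each positive-literal slot, and both endpoints of each negative-literal slot --- is a cycle $F_\ell$ in $G$ (the per-variable consistency constraints are likewise encoded by small cycles, e.g.\ parallel edges, and one may also make the three edges of a slot a triangle so that no feasible selection takes all three). Arranging the clause-cycles and consistency-cycles to be essentially edge-disjoint guarantees that the only way a selection satisfying all coverage constraints can fail to be a forest is by containing some $F_\ell$ in full, i.e.\ by inducing an assignment that falsifies $C_\ell$.

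\emph{Correctness and the main obstacle.} The equivalence is then immediate both ways. A satisfying assignment gives a selection covering every client at radius $r$ and (by design) containing no complete clause-cycle, hence a forest and independent in $M$, so the optimum is $r$. Conversely, any independent selection meeting the coverage constraints is a forest and induces an assignment; if $\phi$ is unsatisfiable this assignment falsifies some $C_\ell$, so $F_\ell$ lies inside the selection, contradicting forest-ness --- hence no feasible selection covers all clients at radius $r$, and since there are no client--supplier distances strictly between $r$ and $3r$, every feasible selection leaves some client at distance $\ge 3r$, giving optimum $\ge 3r$. The main obstacle is the bookkeeping in constructing $G$: because variables recur across clauses, one must spend a fresh supplier/edge per occurrence and stitch the occurrences together with consistency gadgets, all while keeping the clause- and consistency-cycles (nearly) edge-disjoint so that no unintended circuit of $M$ is created that would reject an honest satisfying assignment; this is routine but delicate, whereas the geometry is trivial. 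Finally, it is worth noting why this does not contradict the $(1+\sqrt3)$-approximation for Euclidean $k$-Supplier: with a uniform (cardinality) matroid, $k$-Supplier on a line is polynomial-time solvable, and more fundamentally a client forced to be within $r$ of two suppliers that are $2r$ apart is confined near their midpoint and so lies within only $\sqrt3\,r$ of any third supplier equidistant from the two --- precisely the obstruction that caps the $k$-Supplier gap at $1+\sqrt3$, and exactly what the matroid constraint lets us bypass.
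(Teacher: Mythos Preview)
Your geometric setup on the line is sound --- it does force every client--supplier distance into $\{r\}\cup[3r,\infty)$ --- but the proof rests entirely on the auxiliary graph $G$, and you never build it. The claim that this is ``routine but delicate'' understates the difficulty, and I do not see how to carry it out with a graphic matroid. Take your own suggestion of parallel edges for consistency: to make occurrences $i$ and $j$ of a variable agree you might set $b_i\parallel a_j$ and $b_j\parallel a_i$, which indeed forces the two slots to match. But once the variable has a third occurrence $k$, linking $j$ to $k$ the same way makes $b_j$ parallel to both $a_i$ and $a_k$, hence $a_i\parallel a_k$, and now the honest ``endpoints everywhere'' selection $\{a_i,c_i,a_j,c_j,a_k,c_k\}$ already contains a $2$-cycle and is dependent. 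Attempts to repair this (cyclic one-directional chains, using $c$'s instead of $a$'s, etc.) push the collision elsewhere: for instance, a clause triangle on $\{b_{s_1},b_{s_2},b_{s_3}\}$ pins those three edges to a common triple of vertices, which then forces their parallel partners $a_{s'_1},a_{s'_2},a_{s'_3}$ onto the \emph{same} triple, so the all-true assignment becomes dependent. The general point is that edge-disjointness of your designated circuits does \emph{not} prevent new circuits from appearing inside honest selections --- that is exactly what must be engineered, and no construction is given. As it stands the reduction is not a proof.

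The paper avoids all of this by reducing from $1$-in-$3$-SAT and using a \emph{partition} matroid rather than a graphic one. Each variable $x_i$ is realised as a regular $4d$-gon in the plane with clients and suppliers alternating and the suppliers alternately labelled $x_i/\neg x_i$; for each clause, one supplier of the appropriate label from each of the three relevant polygons forms a part of capacity~$1$, and all remaining suppliers sit in a single part of capacity $dn-m$. A $1$-in-$3$ satisfying assignment picks all $x_i$- or all $\neg x_i$-suppliers on each polygon (covering every client at radius~$1$) and, by the $1$-in-$3$ property, meets each clause part exactly once; conversely any radius-$1$ solution must take exactly $d$ suppliers per polygon and exactly one from each clause part, which decodes back to a $1$-in-$3$ assignment. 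Consistency comes for free (``take all of one label on a cycle''), and each clause is a single capacity bound, so there is no interaction between gadgets to control --- precisely the bookkeeping your graphic-matroid plan leaves open.
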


\paragraph{Other Related Work.}
Apart from $k$-Supplier/$k$-Center, such variants have also been studied for $k$-Median (where the objective is to minimize the sum of connection costs). In particular, there are constant-factor approximation algorithms for $k$-Median with outliers~\cite{Chen08,KrishnaswamyLS18} and Matroid Median~\cite{Krishnaswamy0NS15,Swamy16}.  Moreover, there is an extensive literature on obtaining better approximation ratios (and runtime) for $k$-Median on Euclidean metrics, see e.g., \cite{KR07,HM04}.

\def\sse{\subseteq}
\section{$k$-Supplier with Priorities}

Given a set of suppliers $I$ and clients $J$, where  clients have a priority function $p: J \to \mathbb{R}_+$,  the goal is to choose $k$ suppliers to minimize the maximum ``priority weighted distance'' over all clients. That is, we want to find \[\min_{\substack{C\subseteq I\\|C| \leq k}} \quad \max_{v\in J} \,\, p(v) \cdot d(v, C). \]
For a given set of suppliers $C$, the {\em priority distance} of any client $v\in J$ is $p(v)\cdot d(v,C)$.

\paragraph{Assuming optimal value of $1$.} As is common for min-max optimization problems (see e.g., \cite{hochbaum1985best}), we assume that the algorithm knows the optimal value  $B$. Then, the algorithm either finds a solution of objective at most $\alpha\cdot B$ (where $\alpha$ is the approximation ratio), or proves that the optimal value is more than $B$. As there are only a polynomial number of choices for $B$, we can try each one.  Finally, by  scaling all distances by $B$, we can assume that the optimal value is  $1$. 

\medskip

Our  algorithm is similar to that in \cite{nagarajan2013euclidean} for the basic $k$-Supplier. This involves constructing a graph with some clients $S\sse J$ as nodes and suppliers as edges, and finding the minimum edge-cover in this graph. 
The key difference is that we need to include clients into the node-set $S$ in decreasing order of priorities. See Algorithm~\ref{alg:priority} for details.

\begin{algorithm}
\caption{Algorithm for Priority $k$-Supplier\label{alg:priority}}
\LinesNumbered
initially nodes $S = \emptyset$ and edges $E =\emptyset$\;
\While{$J \neq \emptyset$}{
$\bar{v} = \arg\max_{v \in J} p(v)$\;
$E_{\bar v} \gets \{v \in J: p(v)\cdot d(v,\bar v) \leq \sqrt{3}\}$\;
$J \gets J \setminus E_{\bar v}$ and $S \gets S \cup \{\bar{v}\}$\;
}
\ForAll{supplier $u \in I$}{
\uIf{$\exists$ distinct $\bar v_1, \bar v_2 \in S \text{ s.t. } p(\bar v_1) d(u, \bar v_1) \leq 1$ \text{ and } $p(\bar v_2) d(u, \bar v_2) \leq 1$}{
add edge $(\bar v_1, \bar v_2)$ to $E$ and  label it $u$\;}{}
\uElseIf{$\exists \bar v \in S \text{ s.t. } p(\bar v) d(u, \bar v) \leq 1$}{
add self-loop to $(\bar v, \bar v)$ to $E$ and   label it $u$\;}
}
Find the  minimum edge cover $M$ in graph $(S,E)$\;
\uIf{$|M|\le k$}{output the suppliers labeled on edges of $M$\;}
\uElse{the optimal value is more than $1$\;}
\end{algorithm}

For the analysis, we will show that if the optimal value is at most $1$, the algorithm returns solution $M$ with objective at most $1+\sqrt{3}$. Henceforth, we assume that the optimal value is at most $1$.

\begin{lemma}
Each client $v \in J$ is within priority-distance $(1+\sqrt{3})$ from some supplier in $M$.
\end{lemma}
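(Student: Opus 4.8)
The plan is to trace each client back to the representative that deleted it, and then chase a single edge of $M$ incident to that representative. Fix a client $v$ from the (original) set $J$, and let $\bar v \in S$ be the representative selected in the iteration during which $v$ was removed, so that $v \in E_{\bar v}$ and hence $p(v)\,d(v,\bar v)\le \sqrt 3$. Since the algorithm always removes a maximum-priority surviving client and $v$ survived until $\bar v$'s iteration, we also have $p(\bar v)\ge p(v)$. Because $M$ is an edge cover, $\bar v$ is incident to some edge $e\in M$; writing $u\in I$ for the label of $e$, the construction of $E$ guarantees $p(\bar v)\,d(\bar v,u)\le 1$. The triangle inequality then gives $p(v)\,d(v,u)\le p(v)\,d(v,\bar v)+p(v)\,d(\bar v,u)\le \sqrt 3+p(\bar v)\,d(\bar v,u)\le 1+\sqrt 3$, where the middle step uses $p(v)\le p(\bar v)$ on the term $d(\bar v,u)$. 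So the whole lemma reduces to the two facts used above, the only nontrivial one being that the minimum edge cover $M$ of $(S,E)$ exists, i.e.\ that $(S,E)$ has no isolated vertex.

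The main step will be the following Euclidean structural claim: \emph{each supplier $u\in I$ lies within priority-distance $1$ of at most two representatives in $S$}. I would prove it by contradiction. Suppose $u$ is within priority-distance $1$ of three representatives, and index them $\bar v_1,\bar v_2,\bar v_3$ in order of selection, so $p(\bar v_1)\ge p(\bar v_2)\ge p(\bar v_3)$ and $d(u,\bar v_i)\le 1/p(\bar v_i)$ for each $i$. For $i<j$ the representative $\bar v_j$ was present but not removed in $\bar v_i$'s iteration (otherwise it could never be selected), so $\bar v_j\notin E_{\bar v_i}$, i.e.\ $d(\bar v_i,\bar v_j)>\sqrt 3/p(\bar v_j)$. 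Feeding $d(u,\bar v_i)\le 1/p(\bar v_i)\le 1/p(\bar v_j)$, $d(u,\bar v_j)\le 1/p(\bar v_j)$, and this separation into the law of cosines for the triangle $u\bar v_i\bar v_j$ forces the angle at $u$ to have cosine strictly below $-\tfrac12$, for every one of the three pairs. Writing $a_i=(\bar v_i-u)/\|\bar v_i-u\|$, this says $\langle a_i,a_j\rangle<-\tfrac12$ for all $i\ne j$, whence $\|a_1+a_2+a_3\|^2=3+2\sum_{i<j}\langle a_i,a_j\rangle<3-3=0$, which is impossible.

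Granting the structural claim, I would conclude as follows. Assume (as we may) that the optimum is at most $1$, and let $C^*$ be an optimal supplier set. For any $\bar v\in S$ there is $c^*\in C^*$ with $p(\bar v)\,d(\bar v,c^*)\le 1$; by the structural claim $c^*$ is within priority-distance $1$ of $\bar v$ and of at most one other representative, so when the supplier loop processes $c^*$ it adds an edge incident to $\bar v$ (the self-loop at $\bar v$, or an edge joining $\bar v$ to the one other nearby representative). Hence $(S,E)$ has no isolated vertex, a minimum edge cover $M$ exists, and in particular the representative $\bar v$ attached to our fixed client $v$ is covered by an $M$-edge whose label $u$ satisfies $p(\bar v)\,d(\bar v,u)\le 1$. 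Combined with the first paragraph, this proves the lemma.

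The hard part is the structural claim, and the reason it is not entirely routine is the priorities. Unlike the uniform-radius setting of \cite{nagarajan2013euclidean}, here each representative $\bar v$ effectively owns a ball of radius $1/p(\bar v)$, and these radii differ. The device that keeps the $120^\circ$ bound alive is to order the three representatives by selection time --- equivalently by non-increasing priority --- and to measure the separation threshold $\sqrt 3$ against the \emph{larger} of the two radii of each pair, which is exactly what $d(\bar v_i,\bar v_j)>\sqrt 3/p(\bar v_j)$ provides. Lining up the $1$'s and $\sqrt 3$'s with the right priorities is the one genuinely delicate point; everything else is the triangle inequality and the definition of the edge-cover graph.
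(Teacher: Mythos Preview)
Your core argument --- trace $v$ to its representative $\bar v$, use $p(\bar v)\ge p(v)$ together with the triangle inequality --- is exactly the paper's proof of this lemma. You go further by also establishing that $(S,E)$ has no isolated vertex so that $M$ exists; the paper defers this, treating your structural claim as a separate Lemma~\ref{lem:priority-2} and folding the existence of an edge cover into Lemma~3. Your proof of the structural claim via $\|a_1+a_2+a_3\|^2<0$ is a clean variant of the paper's pigeonhole-on-angles argument (some pair has angle $\le 2\pi/3$, then the cosine law forces $p(v_j)d(v_i,v_j)\le\sqrt 3$); the two are essentially equivalent, with yours having the mild advantage of being transparently dimension-independent.
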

\begin{proof}
Consider any $v \in J$: it must lie in   $E_{\bar v}$ for some ``selected'' client ${\bar v}\in S$. Note that  $\bar v$ must be covered by some edge in $M$, say labelled by supplier $u$. Then, 
\[p(v)d(v,u) \leq p(v)d(v, \bar v) + p(v)d(\bar v, u) \leq p(v)d(v, \bar v) + p(\bar v)d(\bar v, u) \leq \sqrt{3} + 1.\] 
The second inequality uses the fact that at the point when $\bar v$ was added to $S$, client  $v$ was also in $J$: so $p(\bar{v})\ge p(v)$. The  third inequality is by definition of $E_{\bar v}$ and edges $E$.
\end{proof}

\begin{lemma}\label{lem:priority-2}
No supplier can serve more than two clients of $S$ within priority-distance $1$.
\end{lemma}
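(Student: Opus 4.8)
The plan is to argue by contradiction. Suppose some supplier $u$ lies within priority-distance $1$ of three distinct selected clients $\bar v_1,\bar v_2,\bar v_3\in S$; write $q_i:=p(\bar v_i)$ and $r_i:=d(u,\bar v_i)$, so the hypothesis reads $q_i r_i\le 1$, i.e. $r_i\le 1/q_i$, for $i=1,2,3$. The first step is to record, for each pair, the separation guaranteed by the while-loop of Algorithm~\ref{alg:priority}. Fix $i\neq j$ and assume without loss of generality that $\bar v_i$ entered $S$ before $\bar v_j$. Because $\bar v_j$ is itself a center, it is deleted from $J$ only at the iteration in which it is selected; hence at the iteration that selected $\bar v_i$ it was still present in $J$, yet it was \emph{not} in $E_{\bar v_i}$ (else it would be deleted there and never become a center), which gives $q_j\, d(\bar v_i,\bar v_j)>\sqrt 3$. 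Moreover the $\arg\max$ rule forces $q_i\ge q_j$, so $q_j=\min(q_i,q_j)$, and therefore
\[
d(\bar v_i,\bar v_j)\;>\;\frac{\sqrt3}{\min(q_i,q_j)}\;=:\;R_{ij}.
\]
The point of the $\min$ is that $R_{ij}\ge 1/\min(q_i,q_j)=\max(1/q_i,1/q_j)\ge\max(r_i,r_j)$.

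Next I would convert this into a statement about angles at $u$. Place $u$ at the origin and identify $\bar v_i$ with its position vector $x_i$, so $\|x_i\|=r_i$. (If some $r_i=0$ then for the other two indices $d(\bar v_i,\bar v_j)=r_j\le 1/q_j\le R_{ij}$, already contradicting the display above; so all $r_i>0$ and the angles below are defined.) Let $\theta_{ij}$ be the angle between $x_i$ and $x_j$ at the origin. The law of cosines gives $\|x_i-x_j\|^2=r_i^2+r_j^2-2r_ir_j\cos\theta_{ij}$, and combining this with $\|x_i-x_j\|^2>3R_{ij}^2\ge 3\max(r_i,r_j)^2$ a one-line computation yields $\cos\theta_{ij}<-\tfrac12$, i.e. $\theta_{ij}>120^\circ$, for all three pairs. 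But three vectors cannot make pairwise angles all exceeding $120^\circ$: for the unit vectors $w_i:=x_i/\|x_i\|$ we have $0\le\|w_1+w_2+w_3\|^2=3+2\sum_{i<j}w_i\cdot w_j$, hence $\sum_{i<j}w_i\cdot w_j\ge-\tfrac32$, which is impossible if each $w_i\cdot w_j=\cos\theta_{ij}<-\tfrac12$. This contradiction proves the lemma.

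The only genuinely delicate step is the bookkeeping in the first paragraph: one must read off the pairwise separation scaled by the \emph{smaller} of the two priorities rather than by any single global quantity, since it is precisely the inequality $R_{ij}\ge\max(r_i,r_j)$ (equivalently, that $\bar v_i$ and $\bar v_j$ both lie in a common ball of radius $R_{ij}/\sqrt3$ about $u$) that pushes every pairwise angle past $120^\circ$. Everything after that — the law of cosines and the sum-of-unit-vectors inequality — is routine, and notably the argument never uses closeness to a \emph{common} supplier beyond the per-client bounds $r_i\le 1/q_i$, so the three differing ball-radii $1/q_i$ cause no trouble.
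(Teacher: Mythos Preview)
Your argument is essentially the paper's proof run in the contrapositive direction: the paper picks one pair with angle $\le 2\pi/3$ and bounds its distance to force $\bar v_j\in E_{\bar v_i}$, while you use the separation to force every pairwise angle past $2\pi/3$ and then invoke the sum-of-unit-vectors identity. Both are the same geometric fact; your version has the small advantage that the impossibility of three pairwise angles exceeding $2\pi/3$ is actually justified (via $\|w_1+w_2+w_3\|^2\ge 0$), whereas the paper simply asserts it.

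One slip to fix: you write ``$\|x_i-x_j\|^2>3R_{ij}^2\ge 3\max(r_i,r_j)^2$'', but from $d(\bar v_i,\bar v_j)>R_{ij}$ you only get $\|x_i-x_j\|^2>R_{ij}^2$. The factor $3$ is already inside $R_{ij}$ (you set $R_{ij}=\sqrt3/\min(q_i,q_j)$), so the correct chain is $\|x_i-x_j\|^2>R_{ij}^2\ge 3\max(r_i,r_j)^2$, which is exactly what your subsequent ``one-line computation'' needs. With that correction the estimate $\cos\theta_{ij}<-\tfrac12$ and the rest of the proof go through.
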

\begin{proof}
Suppose for a contradiction that for supplier $u \in I$ and clients $v_1, v_2, v_3 \in S$ are within priority-distance $1$ from $u$. Then, we have $p(v_i)d(v_i, u) \leq 1$ for $i = 1,2,3$. 
There is at least one pair of $v_i, v_j$ such that the angle $\theta = \angle v_i u v_j\le 2\pi/3$. See Figure~\ref{fig:priority}. Suppose without loss of generality that $p(v_i) \geq p(v_j)$, so that $v_i$ is added to $S$ before $v_j$. By the cosine law, 
\begin{align*}
    d(v_i, v_j) &= \sqrt{d(v_i,u)^2  +d(v_j,u)^2  - 2\cdot d(v_i,u)\cdot d(v_j,u)\cdot \cos \theta} \\
&\le \sqrt{\frac{1}{p(v_i)^2} + \frac{1}{p(v_j)^2} + \frac{1}{p(v_i)p(v_j)}} \leq \sqrt{\frac{3}{p(v_j)^2}} = \frac{\sqrt{3}}{p(v_j)} 
\end{align*}
It follows that \[p(v_j)d(v_j, v_i) \leq p(v_j) \frac{\sqrt{3} }{p(v_j)} = \sqrt{3}.\]
Therefore, $v_j$ should have been in $E_{v_i}$ and can not be in $S$, a contradiction.
\end{proof}

\begin{figure}
    \centering
    \includegraphics[scale=0.3]{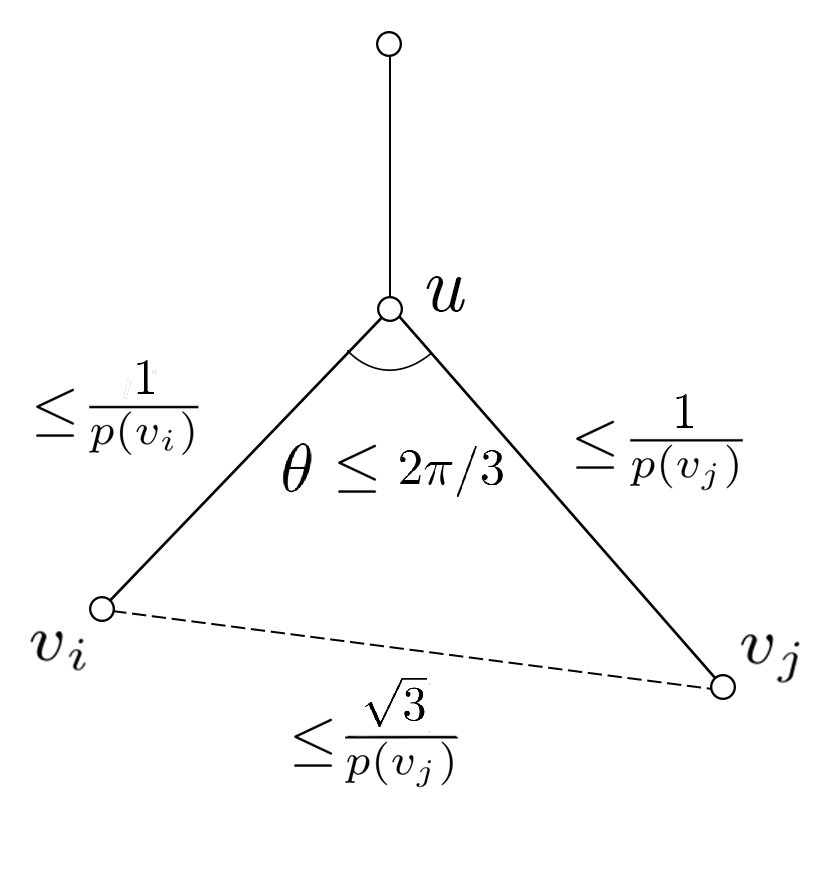}
    \caption{Illustration of Lemma~\ref{lem:priority-2}.}
    \label{fig:priority}
\end{figure}

\begin{lemma}
The minimum edge cover $M$ satisfies $|M| \leq k$.
\end{lemma}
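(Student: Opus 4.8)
The plan is to compare the edge cover $M$ found by the algorithm with the optimal solution $C^*$, which uses at most $k$ suppliers and achieves objective value at most $1$. First I would observe that since we have assumed the optimal value is at most $1$, there is a set $C^* \subseteq I$ with $|C^*| \le k$ such that every client $v \in J$ (in particular every selected client $\bar v \in S$) satisfies $p(v) \cdot d(v, C^*) \le 1$. So each $\bar v \in S$ has some supplier $u^*(\bar v) \in C^*$ with $p(\bar v)\, d(\bar v, u^*(\bar v)) \le 1$, which means the algorithm added an edge (or self-loop) labeled $u^*(\bar v)$ incident to $\bar v$ in the \textbf{if/else-if} block. Hence the set of edges labeled by suppliers in $C^*$ forms an edge cover of $(S, E)$: every node of $S$ is incident to at least one such edge.

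Next I would bound the number of such edges. By Lemma~\ref{lem:priority-2}, each supplier $u \in C^*$ serves at most two clients of $S$ within priority-distance $1$, so $u$ contributes at most one edge (a genuine edge covering two nodes, or a self-loop covering one node) to this edge-covering subgraph. Actually one must be slightly careful: the algorithm adds at most one edge per supplier $u$ (the \textbf{if} and \textbf{else-if} branches are exclusive, and each adds a single edge), so the number of edges labeled by suppliers in $C^*$ is at most $|C^*| \le k$. Therefore $(S,E)$ admits an edge cover of size at most $k$, and since $M$ is a \emph{minimum} edge cover, $|M| \le k$.

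The one point that needs care — and the main (small) obstacle — is confirming that the edges labeled by $C^*$ genuinely cover \emph{all} of $S$, i.e., that for every $\bar v \in S$ the algorithm did in fact add an edge incident to $\bar v$ with a label in $C^*$. This follows because $\bar v$ is served by some $u^* \in C^*$ at priority-distance $\le 1$, so when the \textbf{for all} loop processes $u^*$, at least one of the two branches fires and adds an edge incident to $\bar v$; even if that edge ends up labeled by a different supplier in $C^*$ (because $u^*$ also serves a second selected client), the node $\bar v$ is still covered. Combining, the subgraph of edges with labels in $C^*$ covers $S$ using at most $|C^*| \le k$ edges, so $|M| \le k$, as claimed. (Note this also certifies correctness of the final \textbf{else} branch: if the minimum edge cover exceeds $k$, no feasible solution of value $\le 1$ can exist.)
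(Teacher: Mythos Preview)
Your proof is correct and follows essentially the same approach as the paper: exhibit the edges labeled by the optimal suppliers $C^*$ as an edge cover of $(S,E)$ of size at most $k$, using Lemma~\ref{lem:priority-2} to guarantee each supplier contributes only one edge and that this edge covers the right node(s). The parenthetical about the edge being ``labeled by a different supplier in $C^*$'' is slightly muddled (the edge added when processing $u^*$ is always labeled $u^*$), but this does not affect the argument.
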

\begin{proof}
Let $M^*\sse I$ be the optimal set of suppliers. Note that $M^*$   covers each client within priority distance $1$. Moreover, by Lemma~\ref{lem:priority-2}, each supplier can cover at most two clients of $S$ within priority distance $1$. In other words, taking the edges corresponding to the suppliers $M^*$ in graph $(S,E)$, we get an edge cover. Therefore, the minimum edge cover $M$ has size at most $|M^*|=k$.
\end{proof}
Combining the lemmas above, we obtain Theorem~\ref{thm:priority}.

\section{$k$-Suppliers with Outliers}
Here, we are given a set of suppliers $I$ and clients $J$ along with bounds  $k$ on the number of chosen  suppliers and $\ell$ on the number of outlier clients. As mentioned earlier, we assume that the optimal value is $1$, and aim to find a solution with objective at most $1+\sqrt{3}$. This would prove  Theorem~\ref{thm:outliers}.

We start with a natural  LP relaxation  where decision variables  $y_i$ correspond to selecting  suppliers and $z_j$ correspond to choosing outlier clients.
\begin{align}
& \sum_{i \in I} y_i \le k & \label{lp:start}\\
    & z_j + \sum_{i \sim j} y_i \geq 1 & \forall j \in J\\
    & \sum_{j \in J}z_j \leq \ell& \label{lp:outlier-constraint} \\ 
    &0 \leq z, y \leq 1 \label{lp:end}
\end{align}
Above,   $i \sim j$ denotes client $j$ being within unit distance from supplier $i$, i.e., supplier $i$ can serve client $j$. While these constraints suffice to obtain a $3$-approximation algorithm (even on general metrics), we need to add stronger constraints for the improved $1+\sqrt{3}$ approximation ratio. 

Define a subset of clients $S\sse J$ to be {\em well-separated} if all pairwise distances in $S$ are greater than $\sqrt{3}$, i.e., $d(j, j') > \sqrt{3}$ for every $j, j' \in S$. Also, for any set of clients $S$, we will denote the set of suppliers  which can  serve at least one client in $S$ by $f(S) \subseteq I$. 
The stronger constraints we want to add are the following: 
\begin{equation}\label{eq:well-sep-constraint}
    z(S) + y(f(S)) \geq \ceil{|S|/2} \qquad \forall S\sse J \text{ well-separated}.
\end{equation}
Above, we use the shorthand $z(S):=\sum_{j\in S} z_j$ and $y(f(S)):=\sum_{i\in f(S)} y_i$.

We now show that  these constraints are valid for any (integral) solution to $k$-Supplier with Outliers. Consider any well-separated set $S$. Note that no supplier can serve more than two clients in $S$: this follows from Lemma~\ref{lem:priority-2} with all priorities being $1$ (or Lemma~1 in \cite{nagarajan2013euclidean}). Hence, a  total of at least  $ \ceil{|S|/2} $  suppliers from $f(S)$ or outliers in $S$ are needed to ``cover''  the clients in $S$.

Our final LP relaxation, referred to as the 
``Master LP'' consists of constraints \eqref{lp:start}-\eqref{lp:end} and \eqref{eq:well-sep-constraint}. There are an exponential number of  well-separated constraints, and we are not aware of a separation oracle for these. So,  this LP  is difficult to solve directly. Instead, we will use a round-or-cut approach that either (i)  finds a solution of objective at most $1+\sqrt{3}$, or (ii) proves that the Master LP is infeasible. Note that  case (ii) also implies that the optimal value of the $k$-Supplier with Outliers problem is more than $1$. So this would suffice to prove Theorem~\ref{thm:outliers}.

We are now ready to describe the algorithm, which relies on the ellipsoid algorithm with separation-oracles. We will maintain a candidate solution $(y,z)$ for the Master-LP, and an ellipsoid ${\cal F}$ that is guaranteed to contain Master-LP. 

In each iteration below, we either (i) find an approximate solution to $k$-Supplier with Outliers, or (ii) identify a violated constraint for the Master-LP, which is used to update our solution $(y,z)$ and the ellipsoid ${\cal F}$.  
Formally, we repeat the following steps. 
\begin{enumerate}
\item If $(y,z)$ violates any of the (polynomially many)  constraints \eqref{lp:start}-\eqref{lp:end}, then update   solution $(y,z)$ and   ellipsoid ${\cal F}$ based on the violated constraint. Continue to the next iteration.
    \item Let nodes $A = \emptyset$, $B = J$.
    \item Order clients by outlier values from the LP solution: $z_1 \le z_2 \le \ldots \le z_n$, where $n = |J|$.
    \item \label{rep} 
    While $B \neq \emptyset$ do: 
    \begin{itemize}
        \item Let $j \in B$ with the client with lowest $z_j$.
        \item Take all clients in $B$ within distance $\sqrt{3}$ of $j$ (including itself) and assign them to $R_j$.
        \item Remove $R_j$ from the set $B$.
        \item Let $a(j) = |R_j|$ denote the number of clients assigned to $j$.
        \item Add node $j$ to $A$.
    \end{itemize}
    \item Construct a graph $G$ with nodes $A$  and the following edges. For each supplier $i$: 
    \begin{itemize}
        \item If there are two distinct clients $j_1,j_2\in A$ within distance $1$ from $i$, add edge $(j_1,j_2)$ labelled by $i$.
        \item Otherwise, if there is just one client $j\in A$ within distance $1$ from $i$, add self-loop $(j,j)$ labelled by $i$.
    \end{itemize}
    Let  $E$ be the set of all  edges added above. Furthermore, add a distinct set $L$ of self-loops at each vertex $j \in A$: the loop at $j$ represents making $j$ an outlier. All edges of $E$ have  weight $0$. Each loop $(j,j)$ in $L$ has  weight $a(j)$.
    
    \item Check whether $(y, z)$ satisfies the following constraints:
    \begin{equation}\label{eq:edge-cover-constraints}
        z(S) + y(f(S)) \geq \lceil |S| / 2 \rceil \qquad \forall S \subseteq A.
    \end{equation}
    These constraints exactly specify the edge cover polytope of graph $G=(A, E\ \dot\cup \ L)$ and can be efficiently checked~\cite{schrijver2003combinatorial}. 
   
    \item \label{case1} If $(y, z)$ violates \eqref{eq:edge-cover-constraints} for some $S \subseteq A$, then:
    \begin{itemize}
        \item Update solution $(y,z)$ and the ellipsoid ${\cal F}$ based on the constraint for $S$. (Note that the constraint for $S$ appears in \eqref{eq:well-sep-constraint} of the Master-LP as $S\sse A$ is well-separated.)
        \item Continue to the next iteration.
    \end{itemize}

    \item \label{case2} If $(y,z)$ satisfies \eqref{eq:edge-cover-constraints}, apply Theorem~\ref{thm:EC-card} below to obtain a solution $M$ to min-weight edge-cover on graph $G$ with a cardinality constraint on $E$. Output the suppliers in $M\cap E$ as the approximate solution, and stop.
    
   \end{enumerate}

Assuming that the algorithm never stops in step~8, the standard analysis for the ellipsoid algorithm (see e.g., \cite{gls1988}) implies that we can terminate after a polynomial number of iterations and  conclude that the Master-LP is infeasible. Therefore, the overall algorithm is guaranteed to run in polynomial time. Moreover, we either return some solution $M$ (in step 8) or prove that the Master-LP is infeasible. In the analysis below, we will show that the solution $M$ obtained in step~8 is a $1+\sqrt{3}$ approximation for  $k$-Supplier with Outliers.

\paragraph{Edge cover with a cardinality constraint.} Consider a graph $G$ on nodes $A$ and edges $E' = E \ \dot\cup\ L$, where $L$ only contains self-loops. (Edges in $E$ can be 2-edges or self-loops.) Note that we use the same notation as for the graph constructed in step~5 of the above algorithm. Each edge $e\in E'$ has a weight $w_e$. We are interested in solving the minimum weight edge-cover problem on $G$ subject to a  cardinality constraint of $k$ on $E$. That is, we want a min-weight edge cover $M\sse E'$ where $|M\cap E|\le k$. Note that the cardinality constraint does not include   all edges $E'$, but only those in set $E$. We will show that this problem can be solved in polynomial time using the natural LP relaxation.  

Consider the following linear program $LP_{ECC}$ for the above  edge-cover problem with a  cardinality constraint. Recall that the  edges are  $E' = E \ \dot\cup\ L$. We use decision variables  $y \in \R^E$ for the edges in $E$ and $z \in \R^L$ for the remaining edges $L$. 
\begin{align*}
     & \sum_{e\in E} y_e \leq k &(LP_{ECC})\\ 
    & z(S) + y(f(S)) \geq \ceil{|S|/2} &\forall S \subseteq A\\
    &z, y \geq 0&    
\end{align*}

\begin{theorem}\label{thm:EC-card}
$LP_{ECC}$ is integral. 
Moreover, there is a polynomial time algorithm for the min-weight edge-cover problem with a cardinality constraint.

\end{theorem}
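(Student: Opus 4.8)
The plan is to reduce the edge-cover problem with a cardinality constraint on $E$ to an ordinary minimum-weight edge-cover problem (no cardinality constraint) on a slightly modified graph, and then invoke the known integrality of the edge-cover polytope. The key observation is that a cardinality constraint $\sum_{e\in E} y_e \le k$ on a subset $E$ of the edges can be encoded by adding a new "budget'' gadget. Concretely, I would introduce $k$ new "token'' vertices (or a single auxiliary structure) together with parallel/self-loop edges so that any edge cover of the enlarged graph must ``spend'' one unit of a shared resource for each edge of $E$ it uses, and the resource is capped at $k$. The cleanest realization: add a clique or a perfect-matching-style gadget on $2k$ fresh vertices, and for every $i\in E$ attach $i$ (with its original endpoints) to one of the gadget vertices in a way that forces using $i$ to ``consume'' a gadget vertex; the gadget has exactly enough edges to cover its own $2k$ vertices iff at most $k$ of the $E$-edges are used. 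Then a minimum-weight edge cover of the new graph projects to a minimum-weight edge cover of $G$ satisfying $|M\cap E|\le k$, and conversely.

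The steps, in order, would be: (1) Describe the gadget construction $G' = (A', E'')$ precisely, making fresh vertices and edges, and assign weight $0$ to all gadget edges. (2) Prove the combinatorial equivalence: edge covers $M'$ of $G'$ of weight $W$ are in value-preserving correspondence with edge covers $M$ of $G$ with $|M\cap E|\le k$ of weight $W$ (one direction extends $M$ by a cheapest way to cover leftover gadget vertices; the other direction restricts $M'$ to the original edges and checks the cap is enforced). (3) Invoke the classical result (Schrijver, \cite{schrijver2003combinatorial}) that the minimum-weight edge-cover problem is polynomial-time solvable and that the edge-cover polytope of any graph is described exactly by $x\ge 0$ together with $x(\delta(S) \cup E(S)) \ge \lceil |S|/2\rceil$ for all $S$ — i.e. the system in \eqref{eq:edge-cover-constraints}. (4) Translate back: the projection of the (integral) edge-cover polytope of $G'$ onto the original coordinates $(y,z)$ is exactly the feasible region of $LP_{ECC}$; since the projection of an integral polytope under this kind of coordinate projection keeps integrality of the relevant face/optimum, and more directly since step~(2) gives a weight-preserving bijection on integral points and on LP optima, $LP_{ECC}$ is integral. (5) Conclude the polynomial-time algorithm: solve min-weight edge cover on $G'$ (polynomial size, since we add only $O(k)$ vertices and $O(|E|+k)$ edges), then restrict.

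An alternative, perhaps more transparent route avoids gadgets entirely: observe that $LP_{ECC}$ is the intersection of the edge-cover polytope $P_{EC}(G)$ (which is integral with all constraints of the form \eqref{eq:edge-cover-constraints}) with a single extra inequality $\sum_{e\in E} y_e \le k$; one can then argue directly that adding this one constraint preserves integrality because the constraint involves only the $y$-variables and, on $P_{EC}(G)$, the function $\sum_{e\in E} y_e$ is itself integer-valued at every vertex, so the sliced polytope $P_{EC}(G)\cap\{\sum_{e\in E} y_e \le k\}$ has vertices that are either vertices of $P_{EC}(G)$ or lie on the hyperplane $\sum_{e\in E} y_e = k$ — and on that hyperplane one re-runs an edge-cover-type argument (equivalently, Lagrangian/parametric: the optimum of the cardinality-constrained problem is achieved by edge covers, by a uncrossing/exchange argument on the slack). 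I would present whichever of these is shorter to write carefully.

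The main obstacle is step~(4)/(the integrality argument): it is not automatic that intersecting an integral polytope with one more inequality (or projecting out gadget variables) preserves integrality, so the heart of the proof is verifying that the particular structure here — the cardinality constraint touching only the $E$-coordinates, combined with the total-dual-integrality-flavored description of the edge-cover polytope — does give an integral system. I expect to handle this either by making the gadget reduction airtight (so that integrality is inherited verbatim from the standard edge-cover theorem on $G'$) or by an explicit uncrossing argument showing every vertex of $LP_{ECC}$ corresponds to an integral edge cover of $G$ respecting the cap.
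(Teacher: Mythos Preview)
Your plan has a genuine gap: neither of your two routes uses the hypothesis that $L$ consists \emph{only of self-loops}, and the theorem is simply false without that hypothesis. The paper itself records the counterexample: on the $4$-cycle with edges $a,b,c,d$, take $E=\{a,c\}$ (so $L=\{b,d\}$ consists of $2$-edges) and $k=1$; then $(\tfrac12,\tfrac12,\tfrac12,\tfrac12)$ is feasible for $LP_{ECC}$ but every integral feasible point has $y_b,y_d\ge 1$, so the polytope is not integral. Any argument that would go through verbatim for an arbitrary partition $E'=E\,\dot\cup\,L$ is therefore doomed.

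Concretely, your ``slicing'' route rests on the claim that because $\sum_{e\in E}y_e$ is integer-valued at every vertex of $P_{EC}(G)$, intersecting with $\{\sum_{e\in E}y_e\le k\}$ preserves integrality. That implication is false in general (and the $4$-cycle above is a witness): a new extreme point can arise on an edge of $P_{EC}(G)$ between two integral vertices whose $E$-degrees differ by more than $1$, and the intersection point is then fractional. You would need to show that adjacent vertices of $P_{EC}(G)$ differ in $E$-degree by at most $1$, and that is exactly where the self-loop structure of $L$ must enter; your outline does not supply this. Your gadget route is too vague to evaluate, and more worryingly, edge-cover constraints encode \emph{lower} bounds (every vertex must be covered), so it is unclear how adding vertices and zero-weight edges could ever enforce an \emph{upper} bound $|M\cap E|\le k$; at minimum you would have to explain the mechanism and why it breaks on the $4$-cycle instance.

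For comparison, the paper's proof proceeds by an exchange argument on a convex decomposition. Starting from the (known) integrality of the ordinary edge-cover polytope, write $(y,z)$ as dominating a convex combination $\sum_a\lambda_a\mathbf{1}(J_a)$ of integral edge covers, chosen to minimize the ``excess'' $\sum_a\lambda_a\max(0,|E\cap J_a|-k)$. If some $J_i$ has $|E\cap J_i|\ge k+1$ and some $J_\ell$ has $|E\cap J_\ell|\le k-1$, convert each minimal edge cover to a perfect matching with loops and look at their symmetric difference. Because $L$ contains only self-loops, every $2$-edge in this multigraph carries value $\pm 1$ (according to whether it came from $J_i$ or $J_\ell$), so cycle components have value $0$ and path components have value in $\{-1,0,+1\}$; hence some path has value exactly $+1$, and swapping along it moves one unit of $E$-degree from $J_i$ to $J_\ell$, strictly decreasing the excess. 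This is precisely the place where ``$L$ is self-loops'' is used, and it is the missing idea in your plan.
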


We defer the proof of this theorem to Section~\ref{subsec:LPecc}.

\paragraph{Completing the proof of  Theorem~\ref{thm:outliers}.}
 We now use Theorem~\ref{thm:EC-card} to show that the solution $M$ found in step 8 of our algorithm is a feasible solution to $k$-Supplier with Outliers of objective at most $1+\sqrt{3}$. 
 
 When the algorithm reaches step~8, observe that all  constraints in \eqref{eq:edge-cover-constraints}
 are satisfied by the current solution $(y,z)$. Moreover, by step~1, all the basic  constraints~\eqref{lp:start}-\eqref{lp:end} are also 
 satisfied. It follows that this solution $(y,z)$
 is also feasible for $LP_{ECC}$. By definition of the edge-weights in the edge-cover instance, the weight objective  of this solution is:
 $$\sum_{e\in E}w_e\cdot y_e + \sum_{(j,j)\in L}w_{(j,j)}\cdot z_j = \sum_{j\in A} a(j)\cdot z_j = \sum_{j\in A} |R_j| \cdot z_j\le \sum_{j\in A} \sum_{j'\in R_j} z_{j'} \le \sum_{j'\in J} z_j \le \ell.$$ 
The first inequality uses the fact that we select clients into $A$ in  increasing order of $z$-values: so $z_j\le z_{j'}$ for all $j'\in R_j$. The second inequality uses that $\{R_j : j\in A\}$ are disjoint. The last inequality uses constraint~\eqref{lp:outlier-constraint}. Therefore, the integral solution $M$ to  $LP_{ECC}$ (found by Theorem~\ref{thm:EC-card}) has weight  $\sum_{e\in M} w_e = \sum_{(j,j)\in M\cap L} a(j) \le \ell$. Let $A'\sse A$ denote the clients/nodes in graph $G$ that are covered by the edges $M\cap E$. Note that every client in $A'$ is within distance $1$ from some supplier of $M\cap E$. Hence, every client in $\cup_{j\in A'} R_j$ is within distance $\sqrt{3}+1$ from $M\cap E$.
Moreover, $M\cap L$ must contain the loops at each of the clients $A\setminus A'$. It then follows that $\sum_{j\in A\setminus A'} |R_j|  \le  \sum_{(j,j)\in M\cap L} a(j) \le \ell$. We set $O=\cup_{j\in A\setminus A'} R_j$ to be the outlier clients. From the above discussion, it is clear that each non-outlier client is within distance $\sqrt{3}+1$ from $M\cap E$ and the number of outliers   $|O|\le \ell$. Finally,  $|M\cap E|\le k$ because of the cardinality constraint. It now follows that 
$M\cap E$ is a feasible solution  to $k$-Supplier with Outliers of objective at most $1+\sqrt{3}$.

\subsection{Proof of Theorem~\ref{thm:EC-card}}
\label{subsec:LPecc}

We note that if the set $L=\emptyset$ (i.e., the cardinality constraint involves {\em all} edges) then $LP_{ECC}$ is known to be integral: see the discussion in page 464 of \cite{schrijver2003combinatorial}. However, this does not directly imply Theorem~\ref{thm:EC-card}. Moreover, the following example shows that Theorem~\ref{thm:EC-card} is not true for a cardinality constraint on an {\em arbitrary} edge subset. Hence, our proof below relies crucially on the fact that $L$ only contains self-loops.

\paragraph{Example:} suppose graph $G$ is a 
4-cycle with edges $a, b, c, d$ in that order. The cardinality constraint is imposed on  $E=\{a,c\}$, with a bound of $k=1$. Note that any integral solution to $LP_{ECC}$ must be of the form $(0,\alpha, 0,\beta)$,  $(1,\alpha, 0,\beta)$ or  $(0,\alpha, 1,\beta)$, where $\alpha,\beta \in \mathbb{Z}_{\ge 1}$.  
It can be checked directly that  the solution  $(\frac{1}{2}, \frac{1}{2}, \frac{1}{2}, \frac{1}{2})$  cannot be written as a convex combination of integer solutions, which shows that $LP_{ECC}$ is not integral for this instance.

\medskip

Recall that the set of edges is $E' = E \ \dot\cup\ L$, 
$y \in \R^E$, and $z \in \R^{L}$, where the set $L$ only contains self-loops. ($E$ may contain self-loops too.)  
For any {\em multi-subset} $S\subseteq E'$ of edges, we use $\mathbf{1}(S)\in \mathbb{Z}^{E'}$ to denote the  vector of multiplicities.
Before proving Theorem~\ref{thm:EC-card}, we show the following key lemma.

\begin{lemma}\label{lem:decomp}
Consider any feasible solution $(y,z)$ for $LP_{ECC}$. There is a collection $\{J_i\}_{i=1}^r$ of integral solutions (i.e, edge covers that satisfy the cardinality constraint) and convex multipliers $\{\lambda_i\}_{i=1}^r$ such that $(y,z) \ge \sum_{i=1}^r \lambda_i \cdot \mathbf{1}(J_i)$. 
\end{lemma}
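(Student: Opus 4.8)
The plan is to prove Lemma~\ref{lem:decomp} by reducing, via a contraction operation on the self-loops of $L$, to the classical integrality result for the edge-cover polytope with a cardinality constraint on \emph{all} edges (the case $L=\emptyset$, which is stated in \cite{schrijver2003combinatorial}). The key observation is that a self-loop $(j,j)$ in $L$ covers only the single vertex $j$, so it behaves like a ``private'' way of satisfying vertex $j$'s covering demand without consuming the cardinality budget. Concretely, given a feasible $(y,z)$ for $LP_{ECC}$, I would first note that it suffices to produce the desired dominated convex combination for the \emph{fractional relaxation} of the feasibility region; so I am free to decrease $z$ (and then $y$) until the solution is ``minimal'', which makes the combinatorial structure cleaner. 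In particular I expect that after such a reduction, for each $j$ the variable $z_{(j,j)}$ is at most the ``deficiency'' of $j$ in the $y$-part, i.e.\ $z_{(j,j)} \le \max\{0,\, 1 - y(f(\{j\}))\}$ or something of this flavor, using the constraint \eqref{eq:edge-cover-constraints} for singletons $S=\{j\}$.

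Next I would set up the reduction. Introduce, for each vertex $j\in A$ carrying a loop $z_{(j,j)}$, an auxiliary gadget that turns the self-loop into an ordinary edge so that the classical theorem applies: a standard trick is to attach a pendant vertex $j'$ joined to $j$ by an edge $e_j$ (placed \emph{outside} the cardinality-constrained set), set $y_{e_j} := z_{(j,j)}$, and add the forced-cover requirement at $j'$. In the enlarged graph $\widehat G$ every edge is an ordinary edge, the cardinality constraint is imposed on the set $E$ as before, and one checks that feasibility of $(y,z)$ for $LP_{ECC}$ on $G$ implies feasibility of the corresponding vector $\widehat{(y,z)}$ for the all-edges cardinality-constrained edge-cover polytope on $\widehat G$ — the singleton constraints at the pendant vertices are exactly $y_{e_j}+(\text{loops at }j')\ge 1$, and the constraints for $S\subseteq A$ transfer verbatim. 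Now apply the known integrality of that polytope to write $\widehat{(y,z)}$ (or a coordinatewise-smaller point, if I only have domination) as a convex combination $\sum_i \lambda_i \mathbf 1(\widehat J_i)$ of integral cardinality-respecting edge covers of $\widehat G$. Finally, project back: each $\widehat J_i$ restricted to the original edges, with every used pendant edge $e_j$ reinterpreted as the self-loop $(j,j)\in L$, is an integral edge cover $J_i$ of $G$ satisfying $|J_i\cap E|\le k$, and the convex combination relation is preserved (possibly as an inequality $(y,z)\ge\sum_i\lambda_i\mathbf 1(J_i)$, which is exactly what the lemma asks for). The $\ge$ rather than $=$ is harmless and in fact why the ``minimality'' reduction above need not be pushed to the end.

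The step I expect to be the main obstacle is verifying that the pendant-vertex gadget genuinely lands inside the \emph{cardinality-constrained} edge-cover polytope of \cite{schrijver2003combinatorial} and not merely the ordinary edge-cover polytope — i.e.\ that adding a cardinality constraint on a \emph{proper} subset $E$ of the edges of $\widehat G$ is still covered by a known integrality theorem. The 4-cycle example in the excerpt shows this is false for arbitrary subsets, so the argument must use that in $\widehat G$ the only edges outside $E$ are pendant edges and loops, each incident to a vertex (the pendant $j'$, or $j$ via a loop) of a very restricted type; I would need to check that the standard proof technique (e.g.\ uncrossing / total dual integrality of the edge-cover system, or a direct polyhedral argument splitting on a fractional edge) goes through with the cardinality row added, exploiting that every edge of $\widehat G\setminus E$ is ``pendant-like'' and hence can be rounded locally without affecting the global cardinality count. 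If a clean black-box citation is unavailable, the fallback is to prove the needed integrality of $LP_{ECC}$ directly by induction on the number of fractional coordinates: pick a fractional edge, find a second fractional edge along an alternating walk, and push $\pm\eps$ around it; the self-loop structure of $L$ guarantees such a walk exists and that one of the two rounding directions preserves both the covering constraints and $\sum_{e\in E}y_e\le k$.
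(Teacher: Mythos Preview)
Your reduction via pendant vertices is, as you yourself flag, circular: you place the new edges $e_j$ (and whatever covers the pendant $j'$) \emph{outside} the cardinality-constrained set $E$, so $\widehat G$ is again an instance of $LP_{ECC}$ with a nonempty $L$, and the Schrijver result for a cardinality bound on \emph{all} edges does not apply. Saying that one should then ``check that the standard proof technique goes through'' for this special pendant structure is not a reduction but a restatement of the lemma; there is no off-the-shelf TDI/uncrossing statement that tolerates a cardinality row on a proper edge subset, and the 4-cycle example shows that some structural hypothesis is genuinely needed. (A different gadget --- putting the pendant edges and loops \emph{inside} the constrained set and raising the budget to $k+|A|$ --- is more promising, since a minimal integral cover of $\widehat G$ then uses exactly one gadget edge per $j'$, forcing $|J\cap E|\le k$; but that is not what you proposed, and it still needs care when $z_j>1$ and in verifying that the lifted point satisfies all odd-set inequalities of $\widehat G$.) Your fallback of pushing $\pm\eps$ along an alternating walk at the fractional level is the right instinct but is not a proof: the hard step is precisely showing that some direction simultaneously maintains every covering inequality \emph{and} the cardinality constraint, and you have not indicated how the self-loop structure of $L$ buys you this.

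The paper takes a different route that avoids any black-box reduction. First apply the ordinary edge-cover integrality (no cardinality row) to obtain $(y,z)\ge\sum_a\lambda_a\,\mathbf{1}(J_a)$ with integral edge covers $J_a$ that need not satisfy $|J_a\cap E|\le k$. Among all such dominated convex combinations, choose one minimizing the ``excess'' $\sum_a\lambda_a\max(0,\,|J_a\cap E|-k)$; assume each $J_a$ is minimal. If the excess is positive, pick $J_i$ with $|J_i\cap E|\ge k+1$ and $J_\ell$ with $|J_\ell\cap E|\le k-1$ (the latter exists since the average is at most $k$). Convert each minimal cover into a perfect matching with self-loops, and look at the multigraph union of the two matchings: every component is an even cycle (net $E$-value $0$) or a path capped by self-loops at both ends (net $E$-value in $\{-1,0,+1\}$); this is exactly where ``$L$ consists only of self-loops'' is used. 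Since the total net value is $|J_i\cap E|-|J_\ell\cap E|>0$, some path component has value $+1$; swapping the two covers along that component yields edge covers $X,Y$ with $|X\cap E|=|J_i\cap E|-1$ and $|Y\cap E|=|J_\ell\cap E|+1$, and replacing an $\eps$-fraction of $(J_i,J_\ell)$ by $(X,Y)$ strictly decreases the excess, a contradiction. This exchange argument at the level of integral covers is the missing idea in your proposal.
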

\begin{proof} 
Fix any fractional solution $(y, z)$ to $LP_{ECC}$. Clearly, this is also feasible to the basic edge-cover LP (without the cardinality constraint). By integrality of the edge-cover LP (Theorem 27.3 of \cite{schrijver2003combinatorial}), it follows that  $(y,z)$ dominates   a convex combination of  integral edge-covers. Let $(y, z) \ge \sum_{a=1}^r \lambda_a \cdot \mathbf{1}(J_a)$ denote such a convex combination where the $J_a$ are integral edge-covers and the $\lambda_a$ are convex multipliers. Over all such possible convex combinations, choose the one which produces the least ``variance'' as measured by 
\[\sum_{a=1}^r \lambda_a \cdot \max \left(0, |E\cap J_a|  - k \right).\]  
We can assume (without loss of generality) that each integral edge-cover $J_a$ is minimal. Indeed, if $J_a$ is not minimal, we can replace it by a minimal  edge-cover $\bar{J_a}\subsetneq J_a$: the variance of the resulting convex combination can only decrease.

If the variance is $0$ then we must have $| E\cap J_a|  \leq k$ for every $a$, which implies that each $J_a$ is an  integral solution to $LP_{ECC}$. In this case, the lemma is trivially true.

We now suppose (for a contradiction)   that the variance is positive. As the variance is positive,   we have some $i\in [r]$ with $| E\cap J_i| \geq k + 1$ by integrality. As $(y,z)$ satisfies the cardinality constraint, we have $\sum_{a=1}^r \lambda_a | E\cap J_a|  \le \sum_{e\in E} y_e\le k$. Therefore, there is some   $\ell\in [r]$ with $|E\cap J_\ell|\leq k - 1$ (again by integrality).

Let $C = E\cap J_i$ and $C_o= L\cap J_i$. 
Note that both $C$ and $C_o$ are sets (not multisets) because of minimality of $J_i$. Likewise, let $D= E\cap J_\ell$ and $D_o= L\cap J_\ell$.

We now  convert edge-cover  $J_i = C\dot\cup C_o$ into  a perfect matching  (with loops) as follows.
\begin{enumerate}
    \item Let $ \bar{C}\sse C$ be any maximal matching using only 2-edges. 
    \item Then, for any other edge $e=(u,v)\in (C\cup C_o)\setminus \bar{C}$, if one of its  nodes (say $u$) is incident to the matching $M$ then we modify $e$ into the self-loop $(v,v)$; otherwise edge $e$ remains unchanged. Note that  such an edge $e$ cannot have both nodes $u,v$  incident to matching $M$, by  minimality of edge-cover $C\cup C_o$. 
\end{enumerate}
Let $\bar{C_o}$ denote all edges created  in step 2 above. Note that $\bar{C}\cup \bar{C_o}$ is a perfect matching: each node has exactly one edge (either 2-edge or self-loop) incident to it. Note also that there is a 1-to-1 correspondence between the edge-covers $J_i=C\cup C_o$ and  $\bar{C}\cup \bar{C_o}$. See Figure~\ref{fig:ec-match} for an example.
\begin{center}
\begin{figure}
    \centering
    \includegraphics[scale=0.2]{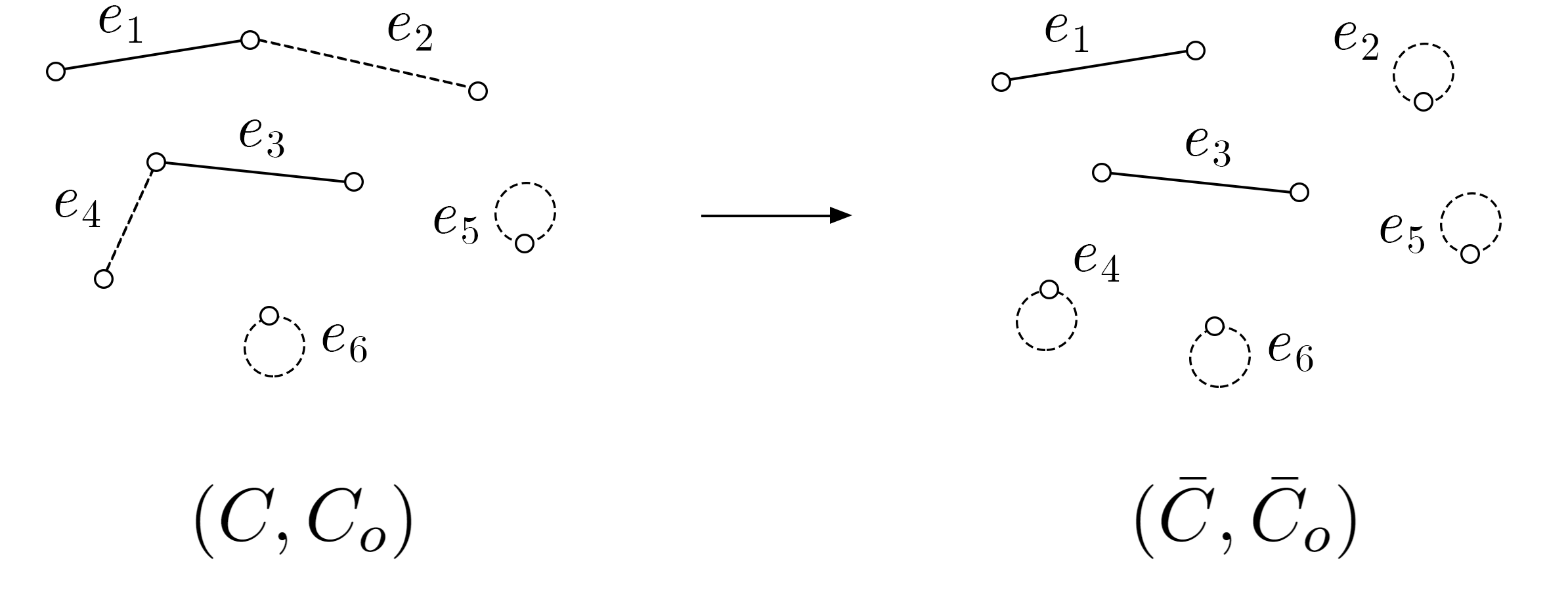}
    \caption{Converting edge-cover to a perfect matching.     \label{fig:ec-match}  }
\end{figure}
\end{center}

We apply the same procedure to modify  edge-cover 
 $J_\ell = D\dot\cup D_o$ into  $(\bar{D},\bar{D_o})$. We now have two graphs, each of which is a perfect matching (with self-loops). 
 Let ${\cal G}$ denote the disjoint  union $\bar{C}\dot\cup \bar{C_o} \dot\cup\bar{D}\dot\cup \bar{D_o}$ of all these edges. Note that each connected component in ${\cal G}$ is either an even cycle (with 2-edges) or a path with self-loops at both ends. 
 
 Assign a value of $1$ (resp. $-1$) to all edges in $\bar C \dot\cup \bar{C_o} $ (resp. $\bar D\dot\cup \bar{D_o} $) that correspond to $E$-edges. All the other edges (corresponding to $L$-edges) are assigned value $0$. Note that every 2-edge has value $+1$ or $-1$. Over the entire graph, the total value is positive as 
 $$ |(\bar C\dot\cup \bar{C_o})\cap E| =|C|  > |D|= |(\bar D\dot\cup \bar{D_o})\cap E|.$$
 So, there is some component $H$ in ${\cal G}$ with positive total value. Note that component $H$ cannot be a cycle: any cycle is even and hence has value $0$. So $H$  is a path with self-loops at both ends. (The path may also be empty, in which case we have a node with two self-loops.) Moreover,  the 2-edges on the path have alternating positive or negative value. The self-loops at the end of path $H$ have either $0$ value or the opposite sign as the 2-edge they're incident to. Hence,  component $H$ has total value $-1$, $0$, or $1$. Since it has positive value, it must have value exactly  $1$. We now define two new edge-covers: $X$ (resp. $Y$) consists of the edges from $J_i$ (resp. $J_\ell$) in all components except $H$, and edges from $J_\ell$  (resp. $J_i$) in component $H$. Note that $X$ and $Y$ are indeed edge covers. Moreover, $|X\cap E| = |J_i\cap E|-1 = |C|-1$ and $|Y\cap E| = |J_\ell\cap E|+1 = |D|+1$.

\begin{figure}[h]
    \centering
    \includegraphics[scale=0.15]{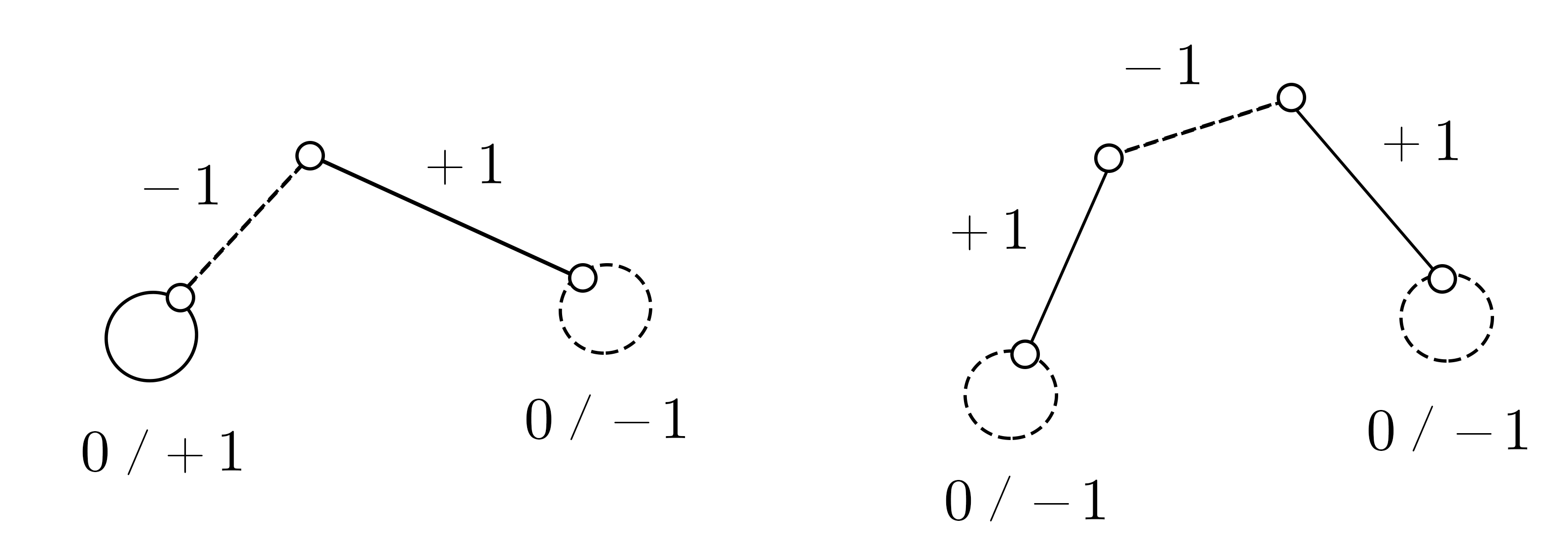}
    \caption{Cases for a component $H$ with positive value}
    \label{fig:my_label}
\end{figure}

 We now construct a new convex combination that has smaller variance, which leads to a contradiction. Recall the edge-covers $\{J_a\}_{a=1}^r$ in the original convex combination. Let   $J_{r+1}=X$ and $J_{r+2}=Y$ be the two new edge-covers. Let $\epsilon = \min\{\lambda_i , \lambda_\ell\}>0$. The convex multipliers are now:
 $$\lambda'_a = \left\{ \begin{array}{ll}
 \lambda_a-\epsilon       &  \mbox{ if } a=i,\ell \\
     \epsilon & \mbox{ if } a=r+1, r+2\\
     \lambda_a & \mbox{otherwise}
 \end{array}\right. ,\quad \forall a\in [r+2].$$
Clearly, $\sum_{a=1}^{r+2}\lambda'_a\cdot \mathbf{1}(J_a) = \sum_{a=1}^{r}\lambda_a\cdot \mathbf{1}(J_a) \le (y,z)$. We now bound the increase in  variance: 
\begin{align*}
    &\sum_{a = 1}^r (\lambda_a' - \lambda_a) \max \left(0, |E\cap J_a| - k \right)  + \epsilon\cdot \max(0, |C|-1-k) + \epsilon\cdot \max(0, |D|+1-k) \\
    =& -\epsilon \cdot \max(0, |C|-k) - \epsilon \cdot \max(0, |D|-k) + \epsilon\cdot \max(0, |C|-1-k) + \epsilon\cdot \max(0, |D|+1-k) \\
    \le &-\epsilon,\end{align*}
where the last inequality uses the fact that $|C|-1-k\ge 0 \ge |D|+1-k$. As $\epsilon>0$, this contradicts the choice of the original convex combination (of minimum variance). This completes the proof. 
\end{proof}

\begin{proof} [Continued Proof of Theorem \ref{thm:EC-card}] We first prove the integrality of $LP_{ECC}$. Given any fractional solution $(y,z)$ to $LP_{ECC}$, Lemma~\ref{lem:decomp} implies $(y, z) \geq \sum_{i=1}^r \lambda_i \cdot \mathbf{1}(J_i)$ for some convex combination of integral solutions. We now show that we can ensure equality, i.e., $(y,z)$ is equal to a convex combination of integral solutions. Clearly, this would prove that  $LP_{ECC}$ is integral.

We can write $(y, z) = \sum_{i=1}^r \lambda_i \cdot \mathbf{1}(J_i) + (y', z')$
for some $y' \in \R^{E}, z' \in \R^{L}$ with $y', z' \geq 0$. Note that  $k \geq y(E) = \sum_{i=1}^r \lambda_i |J_i \cap E| + y'(E)$. So, if $y'(E) > 0$, there exists $i \in [r]$ such that $|J_i \cap E| < k$. Choose an edge $e \in E$ such that $y'_e > 0$. We now perform one of the following modifications that maintains $(y, z) = \sum_{i=1}^r \lambda_i \cdot \mathbf{1}(J_i) + (y', z')$ and $y', z' \geq 0$ while strictly decreasing $y'(E)$.

\begin{enumerate}
    \item If $\lambda_i > y'_e$, let 
    $\lambda_i \leftarrow \lambda_i - y'_e$ and create a new index $r+1$ such that $\lambda_{r+1} = y'_e$ and $J_{r+1} = J_i \ \dot\cup\ \{ e \}$. Let $r \leftarrow r + 1$ and $y'_e = 0$. 
    \item If $\lambda_i \leq y'_e$, let $J_i \leftarrow J_i \ \dot\cup\ \{ e \}$ and $y'_e \leftarrow y'_e - \lambda_i$. 
\end{enumerate}

The above step 1 strictly decreases the support of $y'$, so cannot be done more than $|E|$ times. Between two consecutive applications of step 1's, each application of step 2 strictly increases the size of one $J_i$: so it can  be done at most $rk$ times. (And $r$ increases by at most one for each application of step 1.) Therefore, the above procedure can be repeatedly applied and finished in finite time so that $y'(E) = 0$ at the end. The same procedure can be applied for $z'$ as well, which is even easier because we do not have the cardinality constraint for $L$. At the end, we have $(y, z) = \sum_{i=1}^r \lambda_i \cdot \mathbf{1}(J_i)$ where each $J_i$ is an integral edge cover that satisfies the cardinality constraint. We note that these edge-covers $J_i$ may be multisets (and not minimal edge covers).

To obtain a polynomial time algorithm for min-weight edge-cover with a cardinality constraint, we first solve $LP_{ECC}$ optimally using the ellipsoid algorithm. This can be done because there is an efficient separation oracle for the edge-cover LP. The resulting solution $(y,z)$ may not be an extreme point of $LP_{ECC}$ (and hence not integral). However, we can apply a standard polynomial-time method for converting an arbitrary LP solution into an extreme point solution (assuming  a separation oracle for the constraints); see e.g., Lemma 3.3 in \cite{Jain}. Hence, we can find an optimal  extreme point solution $(y^*,z^*)$ to $LP_{ECC}$ in polynomial time. By integrality of $LP_{ECC}$,  $(y^*,z^*)$ is an integral  optimal solution. 
\end{proof}

\section{Hardness for Matroid Supplier}
We now consider the  Euclidean Matroid Supplier problem. Its input consists of $I \cup J \subseteq \R^{s}$ and a matroid $\mathcal{I}$ on ground set $I$, and the goal is to find an independent set $C \in \mathcal{I}$ that minimizes $\min_{j \in J} d(j, C)$, where $d$ denotes the Euclidean distance. We prove that this problem is $(3-\eps)$-hard to approximate for any constant $\eps > 0$, proving Theorem~\ref{thm:hardness}.

We reduce from the NP-hard 1-in-3-SAT problem \cite{schaefer1978complexity}. This involves $n$ binary variables and $m$ clauses, each consisting of three literals (of any variable or its negation). The goal is   to decide whether there is an  assignment where {\em  exactly} one literal is true in each clause. 

Suppose that we have a $(3 - \epsilon)$-approximation algorithm for Euclidean Matroid Supplier (for any $\epsilon > 0$). Define $c := \frac{2 \pi}{\cos^{-1}(1-\frac{\epsilon}{2})}$.  Given  any instance ${\cal H}$ of 1-in-3-SAT, we  generate an instance ${\cal E}$ of Euclidean Matroid Supplier as follows. Let the variables in ${\cal H}$ be $x_1, \ldots, x_n$, and suppose it has $m$ clauses. Let $d$ be an integer with $d \geq \max(\frac{c+1}{4}, m)$. In ${\cal E}$, we create $n$ cycles embedded as regular $4d$-gons of unit side length, with each cycle $C_i$ representing   variable $x_i$. The  cycles are placed far apart so that no vertex is within distance 3 of a vertex from a different cycle. For each cycle $C_i$, we label its vertices alternatively as  clients and  suppliers. Moreover, the suppliers on cycle $C_i$ are   alternatively labeled  as $x_i$ or $\neg x_i$. More precisely, if the vertices on $C_i$ are numbered   $j=1, 2, \ldots, 4d$ then we label the vertices as follows:
\begin{equation*}
    f(j) = \begin{cases}
              x_i  \text{ (supplier)} & \text{if } j \equiv 0 \mod 4,\\
              \neg x_i \text{ (supplier)} & \text{if } j \equiv 2 \mod 4,\\
            c_{ij} \text{ (client)} & \text{otherwise}.
          \end{cases}
\end{equation*}

Note that the number of suppliers in each cycle is $2d$, leading to $2nd$ suppliers in total.  Let $I$ denote the set of all suppliers. Now we  construct a partition matroid over $I$ in the following way. For each clause $k\in [m]$, say involving variables $x_{i_1}$, $x_{i_2}$, $x_{i_3}$, part $P_k\sse I$ consists of one supplier each from cycles $C_{i_1}, C_{i_2}, C_{i_3}$, where we take a supplier labeled $x_{i_j}$ (resp.  $\neg x_{i_j}$) if the clause  uses $x_{i_j}$ (resp. $\neg x_{i_j}$). See Figure~\ref{fig:matroid} for an 
example. 
We ensure that each supplier is in at most one part:  that is possible because each cycle contains $d\ge m$ suppliers of each label. Finally,  we gather all suppliers not  in any part $P_k$  into another part $P_0=I\setminus \left(\cup_{k=1}^m P_k\right)$. The partition matroid is required to pick at most one supplier from each part $\{P_k\}_{k=1}^m$ and at most  $dn - m$ suppliers from part $P_0$. 
\begin{figure}
    \centering
    \includegraphics[scale=0.14]{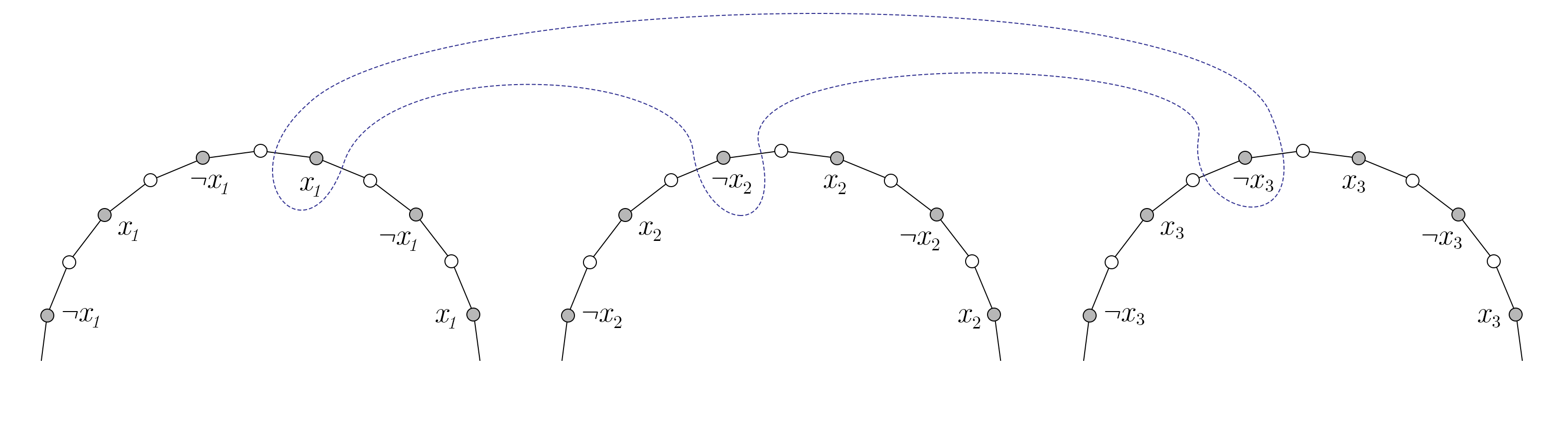}
    \caption{Example of the part corresponding to clause $x_1 \vee \neg x_2 \vee \neg x_3$. \label{fig:matroid}}
\end{figure}

\paragraph{Yes case.}
Suppose  that the  1-in-3-SAT instance ${\cal H}$ is satisfiable by some assignment $a=\{a_i\}_{i=1}^n$ of variables. Consider the Matroid Supplier solution $S$ that selects from each cycle $C_i$ all the $x_i$ (resp. $\neg x_i$) suppliers if $a_i=true$ (resp. $a_i=false$). The total number of selected suppliers $|S|=dn$. Note that each client is within distance one from some supplier in $S$. Moreover, for each clause $k\in [m]$, exactly one literal of this clause is true in assignment $a$: this implies that $|S\cap P_k|=1$.  It follows that a total of $m$ suppliers are selected from $\cup_{k=1}^m P_k$, which means $|S\cap P_0|=nd-m$. Hence, $S$ satisfies the partition matroid constraint. So, the optimal value of instance ${\cal E}$ is at most $1$.

\paragraph{No case.} Suppose that $S'$ is a solution to Matroid Supplier of objective at most $3- \epsilon$. Note that the distance between any client and supplier is either $1$ or at least $1 + 2 \cos(\pi - \frac{4d-2}{4d}\pi) > 1 + 2 \cos(\frac{2}{c}\pi) = 3 - \epsilon$.   So the objective value of solution $S'$ must be one.  
 \begin{cl}
Consider any solution $S'$ to ${\cal E}$ with objective  $1$. For each $i \in [n]$, $S'$ contains either all the $x_i$ suppliers or all the $\neg x_i$ suppliers in cycle $C_i$. Moreover, $|S'\cap P_k| = 1$ for all $k\in [m]$.
\end{cl}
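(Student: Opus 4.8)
The plan is to prove the Claim in two parts, matching its two assertions, and the key geometric fact driving everything is that in each $4d$-gon cycle $C_i$, a client at vertex $j$ is at unit distance from exactly two suppliers — its two cycle-neighbors, one labeled $x_i$ and one labeled $\neg x_i$ (by the labeling pattern $f$) — and from no other supplier (since all non-adjacent vertices on the same cycle, and all vertices on other cycles, are at distance $> 3-\epsilon \ge 1$). So to cover a given client with objective $1$, a feasible solution $S'$ must pick at least one of these two specific neighboring suppliers.

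First I would establish the ``all $x_i$ or all $\neg x_i$'' dichotomy on each cycle. Walk around cycle $C_i$: the suppliers alternate $x_i, \neg x_i, x_i, \neg x_i, \dots$ and between consecutive suppliers sits exactly one client. Each such client forces $S'$ to contain (at least) one of its two flanking suppliers, which are of opposite labels. Now suppose, for contradiction, that $S'$ contains some $x_i$ supplier $u$ but omits some $\neg x_i$ supplier $w$. Using the alternating structure and the fact that every ``gap'' client between an included supplier and an excluded one forces the pattern to propagate, one shows that an omitted $\neg x_i$ supplier forces its two neighboring clients to be covered by the adjacent $x_i$ suppliers, which in turn are fine, but then moving around the cycle the constraint ``each client has a chosen neighbor'' combined with one omission forces every $x_i$ supplier to be chosen and then forbids choosing any $\neg x_i$ one — i.e., the only feasible patterns on a cycle under the ``cover every client at distance $1$'' constraint that are also consistent are the two homogeneous ones (all $x_i$ chosen, or all $\neg x_i$ chosen); any mixed pattern either leaves a client uncovered or is non-minimal in a way that still must contain one homogeneous set entirely. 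I would make this precise by indexing suppliers $0, 1, \dots, 2d-1$ around the cycle and noting client $t$ (between supplier $t$ and supplier $t+1$) requires supplier $t \in S'$ or supplier $t+1 \in S'$; if supplier $t_0 \notin S'$ for some even $t_0$ (an $x_i$), then clients $t_0-1$ and $t_0$ force suppliers $t_0-1$ and $t_0+1$ (both $\neg x_i$) into $S'$; propagating, one gets all $\neg x_i$ suppliers in $S'$ and symmetrically no $x_i$ supplier can then be forced — but actually the cleanest statement is: the complement structure shows the chosen suppliers, to hit every client, must form a vertex cover of the cycle-of-clients, and the minimal/consistent ones respecting ``can't use the omitted label anywhere once you've committed'' are exactly the two homogeneous sets. (The matroid constraint — at most one supplier from each $P_k$ and at most $dn-m$ from $P_0$ — does not interfere here; it only becomes relevant for the second assertion and for the final contradiction, not for the dichotomy itself, though I should double check $S'$ need not be minimal and argue the dichotomy for the ``is it chosen'' indicator regardless.)

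Second, I would prove $|S' \cap P_k| = 1$ for every clause $k$. By the first part, $S'$ restricted to each cycle $C_i$ is ``all $x_i$'' or ``all $\neg x_i$'', which defines a truth assignment $a$. For clause $k$ on variables $x_{i_1}, x_{i_2}, x_{i_3}$, the part $P_k$ contains exactly one supplier from each of $C_{i_1}, C_{i_2}, C_{i_3}$, namely a supplier whose label matches the literal appearing in clause $k$; that supplier lies in $S'$ iff the corresponding literal is true under $a$. So $|S' \cap P_k|$ equals the number of true literals in clause $k$. The matroid constraint forces $|S' \cap P_k| \le 1$, so $|S' \cap P_k| \in \{0, 1\}$, i.e., each clause has at most one true literal. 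Then a counting argument closes it: $S'$ uses $|S' \cap C_i| = d$ suppliers per cycle (all of one label), totaling $nd$; the matroid caps $|S' \cap P_0|$ at $nd - m$, forcing $\sum_{k=1}^m |S' \cap P_k| \ge nd - (nd - m) = m$, hence $|S' \cap P_k| \ge 1$ for every $k$, giving exactly $1$.

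The main obstacle I anticipate is making the cycle dichotomy argument fully rigorous without assuming $S'$ is a minimal edge cover — $S'$ could in principle contain extra suppliers. The resolution is that the dichotomy is about which \emph{labels} are fully present: I claim that if $S'$ covers every client on $C_i$ at distance $1$, then $S'$ contains all $x_i$ suppliers of $C_i$, or all $\neg x_i$ suppliers of $C_i$ (possibly both, but that would violate the global supplier count $nd$ once the matroid and ``$d$ per cycle'' bookkeeping is done, so ``both'' is excluded afterwards). This follows because the $2d$ suppliers and $2d$ clients alternate around an even cycle, each client forcing one of its two supplier-neighbors; a standard parity/propagation argument on this even cycle shows the set of chosen suppliers, if it misses even one supplier, must contain the entire opposite-parity (opposite-label) class. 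I would present that propagation carefully as the heart of the proof and keep the rest as bookkeeping.
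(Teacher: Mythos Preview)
Your proof of the first assertion has a genuine gap: the dichotomy ``$S'$ contains all $x_i$ suppliers or all $\neg x_i$ suppliers on cycle $C_i$'' simply \emph{does not follow} from the covering condition alone, and your propagation argument is incorrect. The covering condition is exactly that the chosen suppliers form a vertex cover of the supplier-cycle $C_{2d}$ (where two suppliers are adjacent iff they flank a common client). But a vertex cover of an even cycle need not contain a full color class. Concretely, for $d=3$ the supplier-cycle is $C_6$ with suppliers $s_1,\dots,s_6$ alternately labeled $\neg x_i, x_i$; the set $\{s_1,s_2,s_4,s_5\}$ is a vertex cover (every edge is hit) yet it omits the $\neg x_i$ supplier $s_3$ \emph{and} the $x_i$ supplier $s_6$. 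Your propagation step ``if $t_0\notin S'$ then $t_0-1,t_0+1\in S'$, and propagating one gets all opposite-label suppliers'' fails precisely here: knowing $t_0+1\in S'$ tells you nothing about $t_0+2$ or $t_0+3$, so the chain dies immediately. (Your contradiction hypothesis is also mis-stated: the negation of the dichotomy is that $S'$ omits some $x_i$ supplier \emph{and} some $\neg x_i$ supplier.)

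The missing idea is the cardinality bound coming from the matroid, which you explicitly set aside for this part. The matroid has rank $(dn-m)+m=dn$, so $|S'|\le dn$. Since each cycle has $2d$ clients and each supplier is at unit distance from at most two of them, $|S'\cap C_i|\ge d$ for every $i$; summing, $|S'|\ge dn$, hence $|S'\cap C_i|=d$ exactly. Now a vertex cover of $C_{2d}$ of size exactly $d$ must be an independent set, and the only size-$d$ independent sets in $C_{2d}$ are the two color classes --- this is where the dichotomy actually comes from. Once you have $|S'|=dn$, your counting for the second assertion works as written (and is essentially the paper's argument).
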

\begin{proof} By the matroid constraint it is clear that $|S'|\le dn$. Note that  each supplier is at unit distance from at most 2 clients, and each cycle has $2d$ clients. Therefore, solution $S'$ must contain at least $d$ suppliers in each cycle $C_i$. As there are $n$ cycles, we must have  $|S'|= dn$, and the first statement  follows. To see the second statement, note that the only way we can have $|S'|= dn$ is to pick exactly one supplier from each $\{P_k\}_{k=1}^m$. \end{proof}
Now, consider the assignment $a'_i=true$ if $S'$ contains all the $x_i$-suppliers in cycle $C_i$, and $a'_i=false$ otherwise. For each clause $k\in [m]$, we  have exactly one true literal in $a'$ because $|S'\cap P_k|=1$. So $a'$ is a valid assignment for instance ${\cal H}$.

Therefore, if ${\cal H}$ is unsatisfiable, the optimal value of ${\cal E}$ is more than $3 - \epsilon$. 
Theorem~\ref{thm:hardness} now follows.

{
\bibliographystyle{alpha}
\bibliography{references}

\newcommand{\etalchar}[1]{$^{#1}$}
\begin{thebibliography}{CKMN01}

\bibitem[CGK20]{chakrabarty2020non}
Deeparnab Chakrabarty, Prachi Goyal, and Ravishankar Krishnaswamy.
\newblock The non-uniform k-center problem.
\newblock {\em ACM Transactions on Algorithms (TALG)}, 16(4):1--19, 2020.

\bibitem[Che08]{Chen08}
Ke~Chen.
\newblock A constant factor approximation algorithm for \emph{k}-median
  clustering with outliers.
\newblock In Shang{-}Hua Teng, editor, {\em Proceedings of the Nineteenth
  Annual {ACM-SIAM} Symposium on Discrete Algorithms, {SODA} 2008, San
  Francisco, California, USA, January 20-22, 2008}, pages 826--835. {SIAM},
  2008.

\bibitem[CKMN01]{charikar2001algorithms}
Moses Charikar, Samir Khuller, David~M Mount, and Giri Narasimhan.
\newblock Algorithms for facility location problems with outliers.
\newblock In {\em SODA}, volume~1, pages 642--651, 2001.

\bibitem[CLLW13]{chen2013matroid}
Danny~Z Chen, Jian Li, Hongyu Liang, and Haitao Wang.
\newblock Matroid and knapsack center problems.
\newblock In {\em International Conference on Integer Programming and
  Combinatorial Optimization}, pages 110--122. Springer, 2013.

\bibitem[CN19]{chakrabarty2019generalized}
Deeparnab Chakrabarty and Maryam Negahbani.
\newblock Generalized center problems with outliers.
\newblock {\em ACM Transactions on Algorithms (TALG)}, 15(3):1--14, 2019.

\bibitem[FG88]{feder1988optimal}
Tom{\'a}s Feder and Daniel Greene.
\newblock Optimal algorithms for approximate clustering.
\newblock In {\em Proceedings of the twentieth annual ACM symposium on Theory
  of computing}, pages 434--444, 1988.

\bibitem[GLS88]{gls1988}
Martin Grötschel, Lovász László, and Alexander Schrijver.
\newblock {\em Geometric Algorithms and Combinatorial Optimization}, volume~40.
\newblock 01 1988.

\bibitem[Gon85]{gonzalez1985clustering}
Teofilo~F Gonzalez.
\newblock Clustering to minimize the maximum intercluster distance.
\newblock {\em Theoretical computer science}, 38:293--306, 1985.

\bibitem[HPM04]{HM04}
Sariel Har-Peled and Soham Mazumdar.
\newblock On coresets for k-means and k-median clustering.
\newblock In {\em STOC}, pages 291--300, 2004.

\bibitem[HS85]{hochbaum1985best}
Dorit~S Hochbaum and David~B Shmoys.
\newblock A best possible heuristic for the k-center problem.
\newblock {\em Mathematics of operations research}, 10(2):180--184, 1985.

\bibitem[HS86]{hochbaum1986unified}
Dorit~S Hochbaum and David~B Shmoys.
\newblock A unified approach to approximation algorithms for bottleneck
  problems.
\newblock {\em Journal of the ACM (JACM)}, 33(3):533--550, 1986.

\bibitem[Jai01]{Jain}
Kamal Jain.
\newblock A factor 2 approximation algorithm for the generalized steiner
  network problem.
\newblock {\em Combinatorica}, 21(1):39--60, 2001.

\bibitem[KKN{\etalchar{+}}15]{Krishnaswamy0NS15}
Ravishankar Krishnaswamy, Amit Kumar, Viswanath Nagarajan, Yogish Sabharwal,
  and Barna Saha.
\newblock Facility location with matroid or knapsack constraints.
\newblock {\em Math. Oper. Res.}, 40(2):446--459, 2015.

\bibitem[KLS18]{KrishnaswamyLS18}
Ravishankar Krishnaswamy, Shi Li, and Sai Sandeep.
\newblock Constant approximation for k-median and k-means with outliers via
  iterative rounding.
\newblock In Ilias Diakonikolas, David Kempe, and Monika Henzinger, editors,
  {\em Proceedings of the 50th Annual {ACM} {SIGACT} Symposium on Theory of
  Computing, {STOC} 2018, Los Angeles, CA, USA, June 25-29, 2018}, pages
  646--659. {ACM}, 2018.

\bibitem[KR07]{KR07}
Stavros~G. Kolliopoulos and Satish Rao.
\newblock A nearly linear-time approximation scheme for the {E}uclidean
  k-median problem.
\newblock {\em SIAM J. Comput.}, 37(3):757--782, 2007.

\bibitem[NSS20]{nagarajan2013euclidean}
Viswanath Nagarajan, Baruch Schieber, and Hadas Shachnai.
\newblock The euclidean \emph{k}-supplier problem.
\newblock {\em Math. Oper. Res.}, 45(1):1--14, 2020.

\bibitem[Ple87]{plesnik1987heuristic}
J{\'a}n Plesn{\'\i}k.
\newblock A heuristic for the p-center problems in graphs.
\newblock {\em Discrete Applied Mathematics}, 17(3):263--268, 1987.

\bibitem[Sch78]{schaefer1978complexity}
Thomas~J Schaefer.
\newblock The complexity of satisfiability problems.
\newblock In {\em Proceedings of the tenth annual ACM symposium on Theory of
  computing}, pages 216--226, 1978.

\bibitem[Sch03]{schrijver2003combinatorial}
Alexander Schrijver.
\newblock {\em Combinatorial optimization: polyhedra and efficiency},
  volume~24.
\newblock Springer Science \& Business Media, 2003.

\bibitem[Swa16]{Swamy16}
Chaitanya Swamy.
\newblock Improved approximation algorithms for matroid and knapsack median
  problems and applications.
\newblock {\em {ACM} Trans. Algorithms}, 12(4):49:1--49:22, 2016.

\end{thebibliography}
}

\end{document}